\journal{}
\newtheorem{definition}{Definition}
\newtheorem{lemma}{Lemma}
\newtheorem{remark}{Remark}
\newcommand\numberthis{\addtocounter{equation}{1}\tag{\theequation}}
\newcommand{\imax}{\vee}
\newenvironment{proofof}[1]{\par
	\pushQED{\qed}%
	\normalfont \topsep6\p@\@plus6\p@\relax
	\trivlist
	\item[\hskip\labelsep
	\bfseries
	Proof of #1\@addpunct{.}]\ignorespaces
}{%
	\popQED\endtrivlist\@endpefalse
}
\begin{document}

\begin{frontmatter}

%% Title, authors and addresses

%% use the tnoteref command within \title for footnotes;
%% use the tnotetext command for theassociated footnote;
%% use the fnref command within \author or \address for footnotes;
%% use the fntext command for theassociated footnote;
%% use the corref command within \author for corresponding author footnotes;
%% use the cortext command for theassociated footnote;
%% use the ead command for the email address,
%% and the form \ead[url] for the home page:
%% \title{Title\tnoteref{label1}}
%% \tnotetext[label1]{}
%% \author{Name\corref{cor1}\fnref{label2}}
%% \ead{email address}
%% \ead[url]{home page}
%% \fntext[label2]{}
%% \cortext[cor1]{}
%% \address{Address\fnref{label3}}
%% \fntext[label3]{}

\title{Efficient Calculation of the Joint Distribution of Order Statistics}

%% use optional labels to link authors explicitly to addresses:
%% \author[label1,label2]{}
%% \address[label1]{}
%% \address[label2]{}
\author{Jonathan von Schroeder}
\author{Thorsten Dickhaus\corref{cor1}}
\cortext[cor1]{Institute for Statistics, University of Bremen, 
P. O. Box 330 440, 28344 Bremen, Germany. Tel: +49 421 218-63651. E-mail address: dickhaus@uni-bremen.de (Thorsten Dickhaus).}

\address{Institute for Statistics, University of Bremen, Germany}

\begin{abstract}
We consider the problem of computing the joint distribution of order statistics of stochastically independent random variables in one- and two-group models. While recursive formulas for evaluating the joint cumulative distribution function of such order statistics exist in the literature for a longer time, their numerical implementation remains a challenging task. We tackle this task by presenting novel generalizations of known recursions which we utilize to obtain exact results (calculated in rational arithmetic) as well as faithfully rounded results. Finally, some applications in stepwise multiple hypothesis testing are discussed.
\end{abstract}

\begin{keyword}
%% keywords here, in the form: keyword \sep keyword
Bolshev's recursion \sep faithful rounding \sep multiple testing \sep Noe's recursion \sep rational arithmetic \sep Steck's recursion
%% PACS codes here, in the form: \PACS code \sep code

%% MSC codes here, in the form: \MSC code \sep code
\MSC[2010] 62G30 \sep 65C50 \sep 65C60
%% or \MSC[2008] code \sep code (2000 is the default)
\end{keyword}
\end{frontmatter}

%% \linenumbers

%% main text
\section{Introduction}
The joint distribution of order statistics $X_{1:n}, \hdots, X_{n:n}$ of stochastically independent random variables $X_1, \hdots, X_n$ plays a pivotal role in the theory of empirical processes and in
nonparametric statistics; see, e.\ g., \cite{shorackwellner} and
\cite{Dickhaus-nonparametric}. For instance, the exact finite-sample null distributions of classical goodness-of-fit tests like the Kolmogorov-Smirnov and the Cram\'{e}r-von Mises test as well as those of modern "higher criticism" goodness-of-fit tests rely on such joint distributions; cf.\ \cite{FiGo-BiomJ}, \cite{FiGo-Bernoulli}, and \cite{FiGo-Annals} for recent developments and further references. In simultaneous statistical inference, the joint distribution of ordered $p$-values is needed to analyze the type I and type II error behaviour of stepwise rejective multiple test procedures; cf.\ Chapter 5 of \cite{Dickhaus-Buch2014}. 

In the case that $X_1, \hdots, X_n$ are identically distributed (we refer to this case as a one-group model), classical recursive methods like Bolshev's recursion, Noe's recursion, and Steck's recursion allow for computing the joint cumulative distribution function (cdf) of $X_{1:n}, \hdots, X_{n:n}$ exactly; cf.\ Section 9.3 of \cite{shorackwellner}. A generalization of Steck's recursion to two-group models has been introduced by \cite{blanchard2014least}. The other aforementioned recursions can be generalized in an analogous manner, as we will demonstrate in Section \ref{sec3} of the present work.

While conceptually appealing, numerical properties of the aforementioned recursions are not well understood yet, and existing implementations into computer software often refer to rule-of-thumb-type upper bounds on $n$ such that the respective implementation is trustworthy. For example, Art B. Owen reports in his implementation of the two-sided version of Noe's recursion in C (see \url{https://www.stat.washington.edu/jaw/RESEARCH/SOFTWARE/BERKJONES/BJ-RBJ-C-Code/noe.c}) that the recursion works well for $n\leq 1000$ but "For larger n (eg 1800 or more) [...] unexplained odd behavior." Similarly, in the R Package mutoss (cf.\ \cite{mutoss}) the following comment is made on the implementation of Bolshev's recursion: "Because of numerical issues $n$ should not be greater than 100." Recently, \cite{moscovich2017} introduced a computational method for one-group models. However, they do not consider the numerical accuracy of their approach rigorously.

In this work, we contribute to the analysis of the numerical accuracy and the computational complexity of existing approaches for computing the joint distribution of $X_{1:n}, \hdots, X_{n:n}$ in a mathematically rigorous manner. Furthermore, we provide novel computational techniques for one- and two-group models which are guaranteed to provide accurate results for arbitrary sample size $n$. The rest of the material is structured as follows. In Section \ref{sec2}, we introduce the relevant quantities. The (generalized) recursions for one- and two-group models are provided in Section \ref{sec3}, together with a rigorous analysis of their computational complexities and their numerical properties. Our proposed exact computational methods rely on rational arithmetic (Section \ref{sec:exact_bolshev}) and on faithful rounding (Section \ref{sec:noe_faithful}), respectively. Applications in multiple hypothesis testing are given in Section \ref{sec6}, and we conclude with a discussion in Section \ref{sec7}. Lengthy proofs as well as pseudo code for the considered algorithms are deferred to the Appendix.

\section{Order Statistics} \label{sec2}
Throughout the following sections, we let $[n]:= \{1, 2, \hdots, n\}$ for a natural number $n\in\mathbb{N}$.
Consider stochastically independent, real-valued random variables $X_1,\hdots,X_n$, which are all driven by the same probability measure $\mathbb{P}$. Let $I_1:=[n]$, and recursively define 
\begin{align*}
    i_j&:=\arg\min_{i\in I_j} X_i,\\
    I_{j}&:=I_{j-1}\setminus\left\{i_{j-1}\right\}
\end{align*} for $j\in[n]$. Then we call $X_{i_1},\hdots,X_{i_n}$ the order statistics of $X_1,\hdots,X_n$,  which we will denote by $X_{1:n},\hdots,X_{n:n}$ in the remainder. The random variable $X_{i:n}$ will be called the $i$-th order statistic of the random vector $(X_1,\hdots,X_n)^\top$. 
Let $F_i$ denote the marginal cdf of $X_i$ for $i \in [n]$. 
This paper will present methods for the quick and numerically stable calculation of 
\begin{align*}
 \Psi_{n_1,n_2}^{G_1,G_2}(\mathbf{b}):=\mathbb{P}\left(X_{1:n}\leq b_1,\hdots,X_{n:n}\leq b_n\right), \; \mathbf{b} = (b_1, \hdots, b_n)^\top \in \mathbb{R}^n,
\end{align*}
assuming that $\forall i\in[n]:F_i\in\{G_1,G_2\}$ where $G_1,G_2$ are two continuous distribution functions on $\mathbb{R}$ and with $n_i =\left|\left\{j\in[n]: F_j=G_i\right\}\right|$ denoting 
the number of $X_j$'s distributed according to $G_i$, $i=1,2$.
Since it holds that
\begin{align*}
    G_1(X_i) \sim
		\begin{cases}
        \text{Uni}{[0,1]}, &F_i = G_1,\\
        G_2\circ G_1^{-1}, &F_i = G_2,
    \end{cases}
\end{align*} 
it follows that $\Psi_{n_1,n_2}^{G_1,G_2}=\Psi_{n_1,n_2}^{\text{Uni}{[0,1]},F}\circ G_1$, where $F:=G_2\circ G_1^{-1}$. Therefore, it is sufficient to consider the calculation of 
$\Psi_{n_1,n_2}^{\text{Uni}{[0,1]},F}(\mathbf{b})$ 
for an arbitrary continuous distribution function $F:[0,1]\rightarrow[0,1]$ and 
argument $\mathbf{b} \in[0,1]^n$. In the sequel, we suppress the dependence on $F$ and $\mathbf{b}$ notationally, and write $\Psi(n_1,n_2):= \Psi_{n_1,n_2}^{\text{Uni}{[0,1]},F}(\mathbf{b})$ for notational convenience.

As outlined in the Introduction, for $n_2=0$ there exist many well known recursions (see e.\ g. Section 9.3 of \cite{shorackwellner}) for computing $\Psi(n_1,n_2)$. There are also newer approaches based on numerical integration(see \cite{moscovich2016}) or based on the Poisson process (see \cite{moscovich2017}). Unfortunately, the former cannot be easily generalized to the case $0 < n_2 < n$, since the Lebesgue density of an order statistic is in general not piece-wise constant. The latter is very fast due to usage of the Fourier transform, but numerically unstable for small values of the $b_i$'s. This can for instance be  demonstrated using the thresholds of the well-known linear step-up test (cf. \cite{benjamini1995}) for control of the false discovery rate (FDR); see Figure \ref{fig:relative_error}.
\cite{glueck2008fast} proposed an algorithm with exponential $O(n^n)$ complexity (cf. \cite[Theorem 4.2]{glueck2008fast}), resulting in a very high computational effort for moderate or large values of $n$. However, the method of \cite{glueck2008fast} can be used to compute $k$-variate marginal distributions for $k < < n$, because in such cases the complexity of their approach reduces to $O(n^k)$.

Since we are mostly concerned with the full joint distribution, we  extend the approach suggested by \cite{blanchard2014least} and provide generalizations of Bolshev's and Noe's recursions. We compare them to the generalization of Steck's recursion proposed by \cite{blanchard2014least} and demonstrate that the Bolshev recursion is suitable for exact computations in rational arithmetic, whereas Noe's recursion is numerically stable when computed in fixed-precision floating point arithmetic.

All our numerical calculations were performed on a Windows 7 machine with an Intel(R) Core(TM) i7-4790 CPU with 32 gigabytes of RAM.

\begin{figure}[!htb]
	\centering
    \input{./relative_numerical_error}
	\caption{Relative error (on the $\log_{10}$ scale) of the methods presented by \cite{moscovich2016} and \cite{moscovich2017}, respectively, when calculating $\Psi(n, 0)$ for the thresholds $b_i \equiv b_i^{(n)}:=0.05 \times i / n$. A value of $-16$ implies at least 15 accurate non-zero digits in base 10. For $n\leq 77$ the relative error of the three methods is visually barely distinguishable. For $n\geq 77$ the dotted line below zero corresponds to the "Numerical Integration".}
	\label{fig:relative_error}
\end{figure}
\section{The Generalized Recursions} \label{sec3}

Let $n:=n_1+n_2$, $n_1,n_2\in\mathbb{N}$. Furthermore, let $X_{1}, \hdots, X_{n_1}\sim \text{Uni}[0,1]$ and 
$X_{n_1+1}, \hdots, X_{n_1+n_2}\sim F$ be jointly stochastically independent. Let for $0\leq i_1\leq n_1$ and $0\leq i_2\leq n_2$\begin{align}
    \Psi(i_1,i_2)&:=\mathbb{P}\left(X_{1:M}\leq b_1,\hdots,X_{i:M}\leq b_{i_1+i_2}\right)\label{eq:two_group_order_statistic},
\end{align} 
where $M:=[i_1]\bigcup\left\{n_1+j\left|j\in[i_2]\right.\right\}$, $[0]:=\emptyset$, $X_{i:M}$ denotes the 
$i$-th order statistic of $(X_j)_{j\in M}$, and $(b_i)_{i\in[n]}$ is an increasing sequence with values in 
$[0,1]$. The following subsections provide formulas for efficiently calculating $\Psi(n_1, n_2)$, and we discuss their computational and numerical properties.

\subsection{Generalization of Bolshev's Recursion} \label{sec31}
\begin{lemma}[Generalization of Bolshev's Recursion]\label{lemma:generalized_bolshev_recursion}
The function $\Psi$ from \eqref{eq:two_group_order_statistic} satisfies the recursion
\begin{align*}
    \Psi(m_1,m_2)&=1-\sum\limits_{\substack{0\leq k_1\leq m_1\\0\leq k_2\leq m_2\\k_1+k_2<m_1+m_2}}M^{(m_1,m_2)}_{k_1,k_2}\cdot\Psi(k_1,k_2),
\end{align*}
where
\begin{equation}
    M^{(m_1,m_2)}_{k_1,k_2} :=\binom{m_1}{k_1}\binom{m_2}{k_2}(1-b_{k_1+k_2+1})^{m_1-k_1}\cdot(1-F(b_{k_1+k_2+1}))^{m_2-k_2}.\label{eq:m_general_bolshev}
\end{equation}
Moreover, we have the following recursive relationships for $M$.
\begin{align*}
    M^{(m_1+1,m_2)}_{k_1,k_2}&=\begin{cases}
        1& k_2=m_2\land k_1=m_1+1\\
        M^{(m_1,m_2)}_{m_1,k_2+1}\cdot\frac{k_2+1}{m_2-k_2}\cdot(1-F(b_{m_1+(k_2+1)+1}))&k_2<m_2\land k_1=m_1+1\\
        M^{(m_1,m_2)}_{k_1,k_2}\cdot\frac{(m_1+1)}{m_1+1-k_1}\cdot(1-b_{k_1+k_2+1})&\text{otherwise},
    \end{cases}\\
    M^{(m_1,m_2+1)}_{k_1,k_2}&=\begin{cases}
        1&k_1=m_1\land k_2=m_2+1\\
        M^{(m_1,m_2)}_{k_1+1,m_2}\cdot\frac{k_1+1}{m_1-k_1}\cdot(1-b_{k_1+m_2+2})&k_1<m_1\land k_2=m_2+1\\
        M^{(m_1,m_2)}_{k_1,k_2}\cdot\frac{(m_2+1)}{m_2+1-k_2}\cdot(1-F(b_{k_1+k_2+1}))&\text{otherwise}.
    \end{cases}
\end{align*}
\end{lemma}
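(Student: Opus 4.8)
The plan is to prove the two assertions separately: first the main recursion for $\Psi$ by partitioning the sample space according to the ``first boundary crossing,'' and then the three-branch recursions for $M$ by direct algebraic manipulation of the closed form \eqref{eq:m_general_bolshev}.

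For the main recursion I would begin from the trivial identity $1=\mathbb{P}(\Omega)$ and exhibit $\Omega$ as a disjoint union. Consider the $m:=m_1+m_2$ variables whose order statistics enter $\Psi(m_1,m_2)$, and let $j\in[m]$ be the smallest index with $X_{j:M}>b_j$, with the convention that the absence of such an index is the event $\{X_{i:M}\le b_i\ \forall i\}$ of probability $\Psi(m_1,m_2)$. The key combinatorial observation is that, writing $k:=j-1$, monotonicity of $(b_i)$ forces exactly $k$ of the $m$ variables to lie below $b_{k+1}$: the constraint $X_{k:M}\le b_k\le b_{k+1}$ yields at least $k$ such variables, while $X_{k+1:M}>b_{k+1}$ yields at most $k$. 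Hence these $k$ variables are precisely the $k$ smallest, so the surviving constraints $X_{i:M}\le b_i$ for $i\le k$ coincide with the order-statistic constraints of this sub-sample.

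Next I would split the $k$ ``below'' variables into $k_1$ uniforms and $k_2$ $F$-variates ($k_1+k_2=k$) and compute the probability of the crossing event at level $k+1$. By independence it factorizes into three pieces: the choice of which variables lie below, contributing $\binom{m_1}{k_1}\binom{m_2}{k_2}$ by exchangeability within each group; the event that the remaining $m_1-k_1$ uniforms and $m_2-k_2$ $F$-variates exceed $b_{k+1}$, contributing $(1-b_{k+1})^{m_1-k_1}(1-F(b_{k+1}))^{m_2-k_2}$; and the event that the sub-sample satisfies its own order constraints, contributing exactly $\Psi(k_1,k_2)$ (here $b_{k_1+k_2}\le b_{k+1}$ guarantees the sub-sample automatically stays below $b_{k+1}$). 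Since $b_{k+1}=b_{k_1+k_2+1}$, the product is precisely $M^{(m_1,m_2)}_{k_1,k_2}\Psi(k_1,k_2)$. Summing over all $(k_1,k_2)$ with $k_1+k_2<m$ and adding $\Psi(m_1,m_2)$ recovers $1$, which rearranges into the claimed recursion. For the recursions for $M$, I would simply insert \eqref{eq:m_general_bolshev}: in the ``otherwise'' branches the ratio $M^{(m_1+1,m_2)}_{k_1,k_2}/M^{(m_1,m_2)}_{k_1,k_2}$ reduces to $\binom{m_1+1}{k_1}/\binom{m_1}{k_1}=(m_1+1)/(m_1+1-k_1)$ times one extra factor $(1-b_{k_1+k_2+1})$, with the $M^{(m_1,m_2+1)}$ case identical after interchanging the roles of the two groups (and of $1-b$ versus $1-F(b)$). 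The boundary branches $k_1=m_1+1$ (resp.\ $k_2=m_2+1$) are treated separately because there $\binom{m_1}{k_1}=0$ and the ratio is undefined; here $(1-b_{\,\cdot})^{0}=1$ collapses the closed form, and the identity $\binom{m_2}{k_2+1}\frac{k_2+1}{m_2-k_2}=\binom{m_2}{k_2}$ verifies the stated formula, the degenerate sub-case $k_2=m_2$ giving $M=1$.

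I expect the only genuinely delicate point to be the main-recursion step: arguing that the ``first crossing at index $k+1$'' events together with the all-below event form a partition of $\Omega$, and that each factorizes cleanly. In particular one must establish carefully that the $k$ below-threshold variables are exactly the $k$ order-statistically-smallest ones, so that their surviving constraints become the defining event of $\Psi(k_1,k_2)$; everything else is bookkeeping with binomial coefficients.
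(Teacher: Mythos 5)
Your proposal is correct and follows essentially the same route as the paper: the paper also decomposes the complement event over the first index $k+1$ at which the order-statistic constraint fails (phrased via the counting function $c_X(t):=\left|\{U_i\leq t : i\in X\}\right|$, so that the first failure forces $c_{[m]}(b_{k+1})=k$ exactly), then splits over which size-$k$ subset lies below $b_{k+1}$, factorizes by independence and counts subsets by group membership, and likewise dismisses the $M$-recursions as routine algebra with binomial coefficients. Your "key combinatorial observation" (monotonicity of $(b_i)$ forcing exactly $k$ variables below $b_{k+1}$) is precisely the step that justifies the paper's passage from $\mathbb{P}\left(\exists k: c_{[m]}(b_k)=k-1\right)$ to the disjoint first-failure sum, so no gap remains.
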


For $n_1=0$ or $n_2=0$ this is simply the well-known Bolshev recursion.

\subsection{Generalization of Steck's Recursion}
\begin{lemma}[Generalization of Steck's Recursion]
Let $b_0:=0$. Then $\Psi$ from \eqref{eq:two_group_order_statistic} satisfies the recursion
\begin{align*}
    \Psi(m_1,m_2)&=(b_{m_1+m_2})^{m_1}F(b_{m_1+m_2})^{m_2}-\sum_{\substack{0\leq k_1\leq m_1\\
    0\leq k_2\leq m_2\\
    k_1+k_2\leq m_1+m_2-2}}M^{(m_1,m_2)}_{k_1,k_2}\cdot\Psi(k_1,k_2),
\end{align*}
where		
\begin{equation}
    M^{(m_1,m_2)}_{k_1,k_2} :=\binom{m_1}{k_1}\binom{m_2}{k_2}\left(b_{m_1+m_2}-b_{k_1+k_2+1}\right)^{m_1-k_1}\left(F(b_{m_1+m_2})-F(b_{k_1+k_2+1})\right)^{m_2-k_2}.\label{eq:steck_rec_eq}
\end{equation}
Letting
\[
    a(k,j) :=\binom{k}{j} \text{~~and~~}
    a(k,j) =\begin{cases}
        1&k=j\\
        \frac{j+1}{k-j}\times a(k,j+1)&j<k,
    \end{cases}\numberthis\label{eq:steck_binomial_rec}
\]
we can write
\begin{align*}		
    M^{(m_1,m_2)}_{0,j}&=a(m_2,j)\cdot\left(b_{m_1+m_2}-b_{j+1}\right)^{m_1}\left(F(b_{m_1+m_2})-F(b_{j+1})\right)^{m_2-j},\numberthis\label{eq:steck_coeff1}\\
    M^{(m_1,m_2)}_{j,m_2}&=a(m_1,j)\cdot\left(b_{m_1+m_2}-b_{j+m_2+1}\right)^{m_1-j}.\numberthis\label{eq:steck_coeff2}
\end{align*}		
Furthermore, we have the following recursion for $M$.		
\begin{align*}
    M^{(m_1,m_2)}_{k_1+1,k_2-1}&=M^{(m_1,m_2)}_{k_1,k_2}\times\frac{F(b_{m_1+m_2})-F(b_{k_1+k_2+1})}{b_{m_1+m_2}-b_{k_1+k_2+1}}\times \frac{m_1-k_1}{k_1+1}\times
    \frac{m_2-k_2+1}{k_2}
\end{align*}

for $0\leq m_1\leq n_1$ and $0\leq m_2\leq n_2$.
\end{lemma}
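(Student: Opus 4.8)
The plan is to establish the master recursion by a probabilistic partition of the sample space, and then to read off the explicit coefficient formulas \eqref{eq:steck_coeff1} and \eqref{eq:steck_coeff2}, the binomial recursion \eqref{eq:steck_binomial_rec}, and the anti-diagonal recursion for $M$ by elementary algebra on the closed form \eqref{eq:steck_rec_eq}. First I would fix $m:=m_1+m_2$ and work with the $m$ points underlying $\Psi(m_1,m_2)$ in \eqref{eq:two_group_order_statistic} ($m_1$ uniform points and $m_2$ points distributed as $F$, all independent), writing $Y_1\le\cdots\le Y_m$ for their order statistics. Since $(b_i)$ is increasing, the event $B:=\{Y_m\le b_{m_1+m_2}\}$ is exactly the event that all $m$ points fall below the largest threshold, so by independence $\mathbb{P}(B)=b_{m_1+m_2}^{m_1}F(b_{m_1+m_2})^{m_2}$, which is the leading term of the recursion. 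The event defining $\Psi(m_1,m_2)$, namely $\{Y_i\le b_i,\ i\in[m]\}$, is contained in $B$, so it remains to account for the complement of this event inside $B$.

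The core step is to partition $B$ according to the first index at which the boundary is exceeded. For a configuration in $B$ let $r$ be the smallest index with $Y_r>b_r$ (with $r=\infty$ if the configuration lies in the $\Psi(m_1,m_2)$-event). Because $Y_m\le b_{m_1+m_2}$ holds throughout $B$, the index $m$ never breaks, so any finite $r$ satisfies $r\in\{1,\dots,m-1\}$; writing $k:=r-1\in\{0,\dots,m-2\}$ this is precisely what makes the summation truncate at $k_1+k_2\le m_1+m_2-2$. On the cell $\{r=k+1\}$ one has $Y_i\le b_i$ for $i\le k$ (hence $Y_k\le b_k<b_{k+1}$) while $Y_{k+1}>b_{k+1}$, so the $k$ smallest points lie in $[0,b_k]$ and the remaining $m-k$ points lie in $(b_{k+1},b_{m_1+m_2}]$. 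Since the lower points are exactly the $k$ globally smallest, their order statistics within the whole sample coincide with their order statistics among themselves, so the constraints $Y_i\le b_i$ $(i\le k)$ say precisely that the lower group forms a configuration of the $\Psi$-type.

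Next I would compute the probability of each cell. Conditioning on which $k_1$ of the $m_1$ uniform points and which $k_2$ of the $m_2$ $F$-points constitute the lower group contributes the factor $\binom{m_1}{k_1}\binom{m_2}{k_2}$, with every choice equiprobable by exchangeability. For a fixed split, independence factorizes the cell probability into the probability $\Psi(k_1,k_2)$ that the $k_1+k_2$ lower points form a valid configuration and the probability $(b_{m_1+m_2}-b_{k_1+k_2+1})^{m_1-k_1}(F(b_{m_1+m_2})-F(b_{k_1+k_2+1}))^{m_2-k_2}$ that each upper point lands in $(b_{k_1+k_2+1},b_{m_1+m_2}]$; these two events live on disjoint point-sets and are therefore independent. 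This reproduces exactly $M^{(m_1,m_2)}_{k_1,k_2}\,\Psi(k_1,k_2)$ from \eqref{eq:steck_rec_eq}. Summing the disjoint cells over $k_1+k_2\le m_1+m_2-2$ and equating the total with $\mathbb{P}(B)$ yields the master recursion after rearrangement.

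The auxiliary identities are then routine: \eqref{eq:steck_coeff1} and \eqref{eq:steck_coeff2} are \eqref{eq:steck_rec_eq} evaluated at $(k_1,k_2)=(0,j)$ and $(j,m_2)$, using $\binom{m_1}{0}=\binom{m_2}{m_2}=1$ and the identification $a(k,j)=\binom{k}{j}$; the recursion for $a$ in \eqref{eq:steck_binomial_rec} is the standard ratio $\binom{k}{j}=\tfrac{j+1}{k-j}\binom{k}{j+1}$ with $\binom{k}{k}=1$; and the coefficient recursion for $M$ follows by forming the quotient $M^{(m_1,m_2)}_{k_1+1,k_2-1}/M^{(m_1,m_2)}_{k_1,k_2}$ from \eqref{eq:steck_rec_eq}, noting that the common sum $k_1+k_2$ keeps the threshold index $k_1+k_2+1$ fixed, so the factors arise from the two binomial ratios and the unit changes in the exponents of $(b_{m_1+m_2}-b_{k_1+k_2+1})$ and $(F(b_{m_1+m_2})-F(b_{k_1+k_2+1}))$. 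I expect the only real obstacle to be the bookkeeping of the partition step: one must verify that the cells $\{r=k+1\}$ are disjoint and exhaust $B$ up to the $\Psi(m_1,m_2)$-event, that the lower group really is the set of $k$ globally smallest points so that its internal order statistics may be identified with the global ones, and that the lower/upper split induces independent events so the probabilities multiply. Everything after that is mechanical.
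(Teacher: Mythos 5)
Your probabilistic derivation of the master recursion is correct, and it is essentially the argument that the paper delegates entirely to \cite[Proposition 1]{blanchard2014least} (the paper prints no proof of its own for this lemma; its appendix proofs of the Bolshev and Noe generalizations use the same ingredients). The first-crossing partition of $B=\{Y_m\leq b_{m_1+m_2}\}$, the truncation at $k_1+k_2\leq m_1+m_2-2$ because the last index can never be the first violator inside $B$, the identification of the lower group with the $k$ globally smallest points (valid since every lower point is $\leq b_k\leq b_{k+1}$ while every upper point exceeds $b_{k+1}$), and the exchangeability-plus-independence factorization yielding $M^{(m_1,m_2)}_{k_1,k_2}\Psi(k_1,k_2)$ are all sound, as are the evaluations \eqref{eq:steck_coeff1}, \eqref{eq:steck_coeff2} and the binomial recursion \eqref{eq:steck_binomial_rec}.

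The one genuine defect is your final step. Forming the quotient from \eqref{eq:steck_rec_eq} with the threshold index $k_1+k_2+1$ held fixed gives
\begin{equation*}
\frac{M^{(m_1,m_2)}_{k_1+1,k_2-1}}{M^{(m_1,m_2)}_{k_1,k_2}}
=\frac{F(b_{m_1+m_2})-F(b_{k_1+k_2+1})}{b_{m_1+m_2}-b_{k_1+k_2+1}}\times\frac{m_1-k_1}{k_1+1}\times\frac{k_2}{m_2-k_2+1},
\end{equation*}
because $\binom{m_1}{k_1+1}/\binom{m_1}{k_1}=(m_1-k_1)/(k_1+1)$ but $\binom{m_2}{k_2-1}/\binom{m_2}{k_2}=k_2/(m_2-k_2+1)$. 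The last factor is the \emph{reciprocal} of the factor $\frac{m_2-k_2+1}{k_2}$ displayed in the lemma. Concretely, for $m_1=1$, $m_2=2$, $k_1=0$, $k_2=1$ one has $M^{(1,2)}_{0,1}=2(b_3-b_2)\left(F(b_3)-F(b_2)\right)$ and $M^{(1,2)}_{1,0}=\left(F(b_3)-F(b_2)\right)^2$, so the true ratio is $\frac{F(b_3)-F(b_2)}{2(b_3-b_2)}$, whereas the printed recursion predicts $\frac{2\left(F(b_3)-F(b_2)\right)}{b_3-b_2}$. In other words, the coefficient recursion as stated in the lemma is a misprint (the correct factor is $\frac{k_2}{m_2-k_2+1}$), and your claim that it ``follows by forming the quotient'' is precisely the step that fails: carried out honestly, your own computation contradicts the identity you set out to verify. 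A complete write-up must either derive the corrected factor or explicitly flag the misprint; asserting the displayed formula as routine algebra is the gap.
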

\begin{proof}
    See \cite[Proposition 1]{blanchard2014least}
\end{proof}

\subsection{Generalization of Noe's Recursion} \label{sec33}
\begin{lemma}[Generalization of Noe's Recursion]\label{lemma:generalized_noe_rec}
Let $b_0:=0$, $Q_{0,0}(0):=1$, $Q_{i_1,i_2}(1):=b_1^{i_1}\cdot F(b_1)^{i_2}$ and for $m>1$
\begin{align*}
    Q_{i_1,i_2}(m)&:=\sum_{\substack{0\leq k_1\leq i_1\\0\leq k_2\leq i_2\\ m-1\leq k_1+k_2}}M^{i_1,i_2}_{k_1,k_2}(m)\cdot Q_{k_1,k_2}(m-1),\\
    M^{i_1,i_2}_{k_1,k_2}(m)&:=\binom{i_1}{k_1}\binom{i_2}{k_2}\times(b_m-b_{m-1})^{i_1-k_1}\times (F(b_{m})-F(b_{m-1}))^{i_2-k_2}
\end{align*}
for $0\leq i_1\leq n_1,0\leq i_2\leq n_2,m\leq i_1+i_2\leq n$.

Then the function $\Psi$ from \eqref{eq:two_group_order_statistic} satisfies
\begin{align*}
    \Psi(i_1,i_2)&=Q_{i_1,i_2}(i_1+i_2)
\end{align*}
for $i_1 \leq n_1$ and $i_2 \leq n_2$.

Letting
\[	
a^{(m),1}(j) :=(b_m-b_{m-1})^{j} \text{~~and~~}
    a^{(m),2}(j) :=\left(F(b_m)-F(b_{m-1})\right)^{j},
\]
we can write
\begin{equation}
    M^{i_1,i_2}_{k_1,k_2}(m) = \binom{i_1}{k_1}\binom{i_2}{k_2}\times a^{(m),1}(i_1-k_1)\times a^{(m),2}(i_2-k_2). \label{eq:noe_coeff_rec}
\end{equation}   
\end{lemma}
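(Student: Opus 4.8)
The plan is to equip the auxiliary array $Q_{i_1,i_2}(m)$ with a direct probabilistic meaning, prove by induction on $m$ that the recursively defined $Q$ equals that meaning, and then read off the claim by specializing to $m=i_1+i_2$. Concretely, I would consider a pool of $i_1$ independent $\mathrm{Uni}[0,1]$ variables and $i_2$ independent $F$-distributed variables and let $\tilde Q_{i_1,i_2}(m)$ denote the probability of the event $A_m$ that (a) every variable in the pool is $\le b_m$, and (b) for each $j\in[m]$ at least $j$ of the pool variables are $\le b_j$. The first task is to identify $\tilde Q_{i_1,i_2}(i_1+i_2)$ with $\Psi(i_1,i_2)$: writing $N=i_1+i_2$, condition (a) is exactly the rank condition ``at least $N$ below $b_N$'' while (b) supplies ``at least $j$ below $b_j$'' for $j<N$, so $A_N=\{\#\{v\le b_j\}\ge j \text{ for all } j\in[N]\}$, which by the standard equivalence between counts and order statistics equals $\{X_{j:N}\le b_j,\ j\in[N]\}$; by exchangeability the particular index set $M$ in \eqref{eq:two_group_order_statistic} is immaterial.

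Next I would treat the base and inductive cases. For $m=0$ the pool is empty and $\tilde Q_{0,0}(0)=1$; for $m=1$ condition (b) reduces to ``at least $1$ below $b_1$'', which is implied by (a) on a nonempty pool, so $\tilde Q_{i_1,i_2}(1)=\mathbb{P}(\text{all}\le b_1)=b_1^{i_1}F(b_1)^{i_2}$, matching the stated initial values. For the step, fix $m>1$ and partition $A_m$ according to the numbers $k_1,k_2$ of uniform resp.\ $F$-variables lying in $[0,b_{m-1}]$. The key observation, which I expect to be the \emph{crux} of the argument, is that the constraints in (b) for $j\le m-1$ localize onto the variables below $b_{m-1}$: since $b_j\le b_{m-1}$ for such $j$, any variable lying in $(b_{m-1},b_m]$ never counts toward ``$\le b_j$'', so those constraints involve only the $k_1+k_2$ variables below $b_{m-1}$ and in particular force $k_1+k_2\ge m-1$.

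Using independence and exchangeability I would then factor each term of the partition: choosing which $k_1$ of the $i_1$ uniforms and which $k_2$ of the $i_2$ $F$-variables fall below $b_{m-1}$ contributes $\binom{i_1}{k_1}\binom{i_2}{k_2}$; each remaining uniform lies in $(b_{m-1},b_m]$ with probability $b_m-b_{m-1}$ and each remaining $F$-variable with probability $F(b_m)-F(b_{m-1})$; and the joint event that the selected $k_1+k_2$ variables are all $\le b_{m-1}$ and satisfy ``at least $j$ below $b_j$'' for $j\le m-1$ is precisely $\tilde Q_{k_1,k_2}(m-1)$. This yields
\[
\tilde Q_{i_1,i_2}(m)=\sum_{k_1+k_2\ge m-1}\binom{i_1}{k_1}\binom{i_2}{k_2}(b_m-b_{m-1})^{i_1-k_1}(F(b_m)-F(b_{m-1}))^{i_2-k_2}\,\tilde Q_{k_1,k_2}(m-1),
\]
which is exactly the defining recursion for $Q$ with coefficients $M^{i_1,i_2}_{k_1,k_2}(m)$ (terms with $k_1+k_2<m-1$ drop out since then $\tilde Q_{k_1,k_2}(m-1)=0$, the pool being too small to meet (b) at $j=m-1$). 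By the induction hypothesis $\tilde Q_{k_1,k_2}(m-1)=Q_{k_1,k_2}(m-1)$, hence $\tilde Q_{i_1,i_2}(m)=Q_{i_1,i_2}(m)$, closing the induction; taking $m=i_1+i_2$ and combining with the first step gives $\Psi(i_1,i_2)=Q_{i_1,i_2}(i_1+i_2)$. Finally, the representation \eqref{eq:noe_coeff_rec} is immediate upon substituting $a^{(m),1}(i_1-k_1)=(b_m-b_{m-1})^{i_1-k_1}$ and $a^{(m),2}(i_2-k_2)=(F(b_m)-F(b_{m-1}))^{i_2-k_2}$ into the definition of $M^{i_1,i_2}_{k_1,k_2}(m)$.
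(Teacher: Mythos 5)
Your proof is correct. The probabilistic interpretation of $Q_{i_1,i_2}(m)$ via count conditions, the count/order-statistic equivalence giving $\tilde Q_{i_1,i_2}(i_1+i_2)=\Psi(i_1,i_2)$, and the inductive partition on how many variables of each group fall in $[0,b_{m-1}]$ — with the key observation that the constraints for $j\le m-1$ localize onto those variables, forcing $k_1+k_2\ge m-1$, followed by the independence/exchangeability factorization — yield exactly the stated recursion, and you handle the edge cases (empty pool, vanishing terms with $k_1+k_2<m-1$) correctly.

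Your route does differ from the paper's in its scaffolding, and the difference is worth noting. The paper follows Shorack and Wellner's \emph{two-sided} Noe recursion: it defines $Q_{i_1,i_2}(m)$ with lower bounds $a_j$ \emph{and} upper bounds $b_j$, truncated at the $m$-th element $c_m$ of the sorted list of all $2i$ boundaries, derives the two-group recursion in that generality (by essentially the same partition-and-factor argument you use, following Blanchard et al.), and only afterwards specializes $a_j\equiv 0$, which requires computing the auxiliary functions $g,h$ and the sorted boundaries $c_m$ explicitly and then reindexing to recover the statement. You work one-sided from the start, which eliminates that bookkeeping and the specialization step, and you make the identification $\Psi(i_1,i_2)=Q_{i_1,i_2}(i_1+i_2)$ explicit rather than inheriting it from the Shorack--Wellner framework; the result is a more self-contained and elementary proof of the lemma as stated. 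What the paper's detour buys is the two-sided version of the generalized recursion (simultaneous lower and upper bounds on the order statistics), which is a genuinely stronger intermediate result, although it is not needed for the lemma itself. The crux — partitioning on the counts below the previous boundary, localizing the constraints, and factoring by independence and within-group exchangeability — is common to both arguments.
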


\subsection{Computational Complexity and Numerical Properties}

The computational complexity (defined to be the number of elementary arithmetic operations on floating point numbers) of each of the aforementioned recursions is given by the following lemma.
\begin{lemma}
The proposed recursions can be implemented using
\begin{description}[leftmargin=!,labelwidth=\widthof{\bfseries Bolshev}]\label{lemma:rec_complexity}
    \item[Bolshev] $O(n_1^2n_2^2)$
    \item[Steck] $O\left(n_1^2n_2^2\log_2(n_1n_2)\right)$
    \item[Noe] $O\left(n_1^2n_2^2(n_1+n_2)\right)$
\end{description} elementary arithmetic operations (addition, subtraction, multiplication, division) and $O(n_1n_2)$ memory (assuming fixed-precision storage of all results).
\end{lemma}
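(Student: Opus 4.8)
The plan is to reduce each bound to a count of elementary (output value, summand) pairs multiplied by the cost of producing one summand; the latter is governed entirely by how cheaply the coefficients $M$ can be evaluated, which is precisely what the auxiliary identities in Lemma \ref{lemma:generalized_bolshev_recursion}, in the generalization of Steck's recursion, and in Lemma \ref{lemma:generalized_noe_rec} are designed to control. For \textbf{Bolshev}, there are $(n_1+1)(n_2+1)=O(n_1 n_2)$ target values $\Psi(m_1,m_2)$, and the sum defining each ranges over at most $(m_1+1)(m_2+1)$ indices; hence the total number of summands is $\sum_{m_1\le n_1,\,m_2\le n_2}(m_1+1)(m_2+1)=O(n_1^2)\cdot O(n_2^2)=O(n_1^2 n_2^2)$. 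The key observation is that in \eqref{eq:m_general_bolshev} the bases $1-b_{k_1+k_2+1}$ and $1-F(b_{k_1+k_2+1})$ of the two powers depend only on $k_1+k_2$ and \emph{not} on the target $(m_1,m_2)$, so the recursions for $M$ in Lemma \ref{lemma:generalized_bolshev_recursion} express each coefficient of a target matrix in terms of an adjacent one in $O(1)$ operations; maintaining the coefficient array while advancing the target indices makes every summand cost amortized $O(1)$, and the $O(n_1^2 n_2^2)$ multiply--add operations in the sums dominate, giving $O(n_1^2 n_2^2)$.

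For \textbf{Steck} the count of summands is identical, $O(n_1^2 n_2^2)$, since the recursion has the same index range. The difference is that in \eqref{eq:steck_rec_eq} the power bases $b_{m_1+m_2}-b_{k_1+k_2+1}$ and $F(b_{m_1+m_2})-F(b_{k_1+k_2+1})$ depend on the target through $m_1+m_2$, so the Bolshev-style reuse across targets is no longer available. Each coefficient is evaluated directly from \eqref{eq:steck_rec_eq}, with the binomial factors obtained in amortized $O(1)$ via \eqref{eq:steck_binomial_rec} and one seed per anti-diagonal supplied by \eqref{eq:steck_coeff1}--\eqref{eq:steck_coeff2}; the two remaining powers, with exponents bounded by $n_1$ and $n_2$, are computed by binary exponentiation in $O(\log_2 n_1)$ and $O(\log_2 n_2)$ operations, i.e. $O(\log_2(n_1 n_2))$ per coefficient. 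Multiplying by the $O(n_1^2 n_2^2)$ summands yields $O(n_1^2 n_2^2\log_2(n_1 n_2))$.

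For \textbf{Noe} the extra discretisation index $m$ enters. For each fixed $m$ the outputs $Q_{i_1,i_2}(m)$ with $i_1+i_2\ge m$ contribute $\sum_{i_1\le n_1,\,i_2\le n_2}O(i_1 i_2)=O(n_1^2 n_2^2)$ summands, and $m$ ranges over $O(n_1+n_2)$ levels, so the total is $O\!\left(n_1^2 n_2^2(n_1+n_2)\right)$ summands. Each summand is $O(1)$ because in $M^{i_1,i_2}_{k_1,k_2}(m)$ the bases $b_m-b_{m-1}$ and $F(b_m)-F(b_{m-1})$ depend only on $m$: by \eqref{eq:noe_coeff_rec} one precomputes the two power tables $a^{(m),1}(0),\dots,a^{(m),1}(n_1)$ and $a^{(m),2}(0),\dots,a^{(m),2}(n_2)$ once per level in $O(n_1+n_2)$ operations, after which every coefficient is a product of three stored quantities. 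This gives $O\!\left(n_1^2 n_2^2(n_1+n_2)\right)$.

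The memory bound follows by identifying what must be retained simultaneously: for Bolshev and Steck the table of $\Psi$-values together with a single coefficient array, each of size $O(n_1 n_2)$; for Noe only the two consecutive levels $Q_{\cdot,\cdot}(m-1)$ and $Q_{\cdot,\cdot}(m)$, of size $O(n_1 n_2)$ each, plus the power tables of size $O(n_1+n_2)$, since earlier levels are never reused. In every case this is $O(n_1 n_2)$. I expect the main obstacle to be the per-summand coefficient analysis: one must verify carefully that the stated recursions genuinely permit amortized $O(1)$ evaluation of $M$ for Bolshev and Noe, and that the target-dependence of the power bases in \eqref{eq:steck_rec_eq} is what forces the additional $O(\log_2(n_1 n_2))$ factor for Steck. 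The bookkeeping behind the Noe memory claim, namely that only two $m$-levels need coexist, is the remaining subtlety.
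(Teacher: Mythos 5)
Your proposal is correct and takes essentially the same approach as the paper's proof: count the $O(n_1^2n_2^2)$ summands (times the $O(n_1+n_2)$ levels for Noe), use the auxiliary recursions to evaluate each coefficient in $O(1)$ operations (Bolshev, Noe) or $O(\log_2(n_1n_2))$ operations (Steck), and bound the memory by observing that only the table of $\Psi$-values plus one coefficient array (respectively, two consecutive $m$-levels of $Q$) need to coexist. The only minor difference is in the Steck bookkeeping: the paper applies binary exponentiation only to the first row and last column of each coefficient matrix via \eqref{eq:steck_coeff1}--\eqref{eq:steck_coeff2} and fills the interior with the $O(1)$ recursion along anti-diagonals, whereas you charge $O(\log_2(n_1n_2))$ to every coefficient; both accountings yield the stated bound.
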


The results of Lemma \ref{lemma:rec_complexity} suggest that Noe's recursion might not be the best choice. However, for small values of the $b_i$'s, Bolshev's recursion and Steck's recursion are inherently numerically unstable. Consider for example $n_1=11,n_2=0$ and 
\begin{align*}
    b_i&:=\begin{cases}2^{-10}&\text{if }i\leq 10,\\
    2^{-1}&\text{if }i=11.
		\end{cases}
\end{align*} 
Then both recursions, when implemented in double precision floating point arithmetic, result in negative values and huge relative errors (cf. Table \ref{tbl:steck_bolshev}) which can be explained by inaccurate or catastrophic cancellation, respectively.

% latex table generated in R 3.4.2 by xtable 1.8-3 package
% Mon Dec 03 16:47:01 2018
\begin{table}[ht]
\centering
\begin{tabular}{lrrrrr}
  \hline
 & Exact Probability & Steck & Rel. Err. (Steck) & Bolshev & Rel. Err. (Bolshev) \\ 
  \hline
2 & 9.76562E-04 & 9.76562E-04 & 0.00000E+00 & 9.76562E-04 & 0.00000E+00 \\ 
  3 & 9.53674E-07 & 9.53674E-07 & 0.00000E+00 & 9.53674E-07 & 0.00000E+00 \\ 
  4 & 9.31323E-10 & 9.31323E-10 & 0.00000E+00 & 9.31323E-10 & 0.00000E+00 \\ 
  5 & 9.09495E-13 & 9.09495E-13 & 0.00000E+00 & 9.09495E-13 & 0.00000E+00 \\ 
  6 & 8.88178E-16 & 8.88178E-16 & 0.00000E+00 & 8.88178E-16 & 0.00000E+00 \\ 
  7 & 8.67362E-19 & 8.67362E-19 & 0.00000E+00 & 1.73472E-18 & 1.00000E+00 \\ 
  8 & 8.47033E-22 & 8.47033E-22 & 0.00000E+00 & 1.10114E-20 & 1.20000E+01 \\ 
  9 & 8.27181E-25 & 8.27181E-25 & 0.00000E+00 & 6.85071E-21 & 8.28100E+03 \\ 
  10 & 8.07794E-28 & 8.07794E-28 & 0.00000E+00 & -2.70517E-20 & 3.34884E+07 \\ 
  11 & 7.88861E-31 & 7.88861E-31 & 0.00000E+00 & -1.12683E-16 & 1.42842E+14 \\ 
  12 & 4.33103E-30 & -1.75898E-20 & 4.06134E+09 & 2.83880E-16 & 6.55456E+13 \\ 
   \hline
\end{tabular}
\caption{Calculation of the probability that uniform order statistics are bounded above by $b_i\in[0,1]^11$ where $b_i=2^{-10}$ if $i\leq10$ and $b_{11}=2^{-1}$. The rows give the intermediate steps of the algorithms. The first column reports the first few non-zero digits of the exact probabilities (computed in rational arithmetic), the second column reports the steps of Steck's recursion (calculated in double precision floating point arithmetic) and the fourth column reports the steps of Bolshev's recursion (also calculated in double precision floating point arithmetic). In the third and the fifth column the relative error of the intermediate value is reported.} 
\label{tbl:steck_bolshev}
\end{table}

Noe's recursion, if implemented in a reasonable manner, never results in negative values. Furthermore by \cite[Equation (3)]{jeannerod2018relative} the relative error is bounded (if the coefficients are computed with a bounded relative error) since all summands are non-negative.

\begin{remark}
    Noe's recursion can be easily parallelized since the $Q_{i_1,i_2}(m)$ can be, for any fixed $m$, computed in parallel.
\end{remark}
\section{Exact Evaluation of Bolshev's Recursion}\label{sec:exact_bolshev}
If only elementary arithmetic operations are utilized and the number of such operations is not too large it is feasible to exactly evaluate expressions using rational arithmetic.\footnote{Our C++ implementation is based on \href{https://gmplib.org/}{The GNU Multiple Precision Arithmetic Library}.} We show that this is indeed the case for the Bolshev recursion as well its generalization for the two-group case presented in Section \ref{sec31}.

First we consider the case where $n_2=0$, hence $n_1 = n$: Even though Bolshev's recursion involves binomial coefficients 
our proposed Algorithm \ref{alg:bolshev} (cf.\ the Appendix) for the one-group case evaluates it using only
\[
    \text{\#Operations} = n+(n+1)+\sum_{k=2}^n\left[2+\sum_{j=1}^{k-1}6\right]=3n^2+n-1
\] 
elementary arithmetic operations (addition, subtraction, multiplication, division). For an illustration, considering the sequence $b_i \equiv b_i^{(n)}:=0.05\times i / n$ we observed the execution times depicted in Figure \ref{fig:bolshev_single}.

\begin{figure}[!htb]
	\centering
    \resizebox{.5\linewidth}{!}{\input{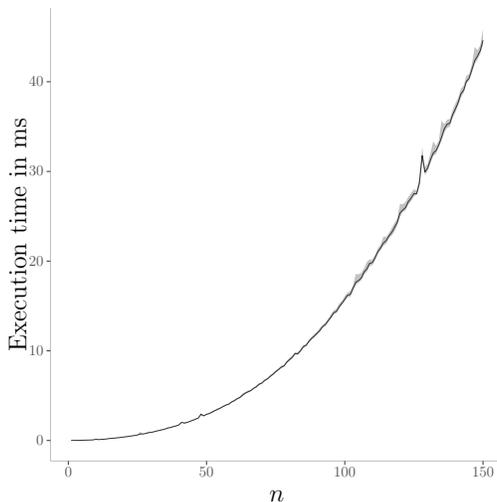}}
	\caption{Execution time of Algorithm \ref{alg:bolshev} in rational arithmetic.}
	\label{fig:bolshev_single}
\end{figure}

\begin{remark}
Our proposed Algorithm \ref{alg:bolshev_general} (cf.\ the Appendix) implements the two-group case 
in $O(n_1^2n_2^2)$ elementary arithmetic operations.
Consequently, for equal sample sizes $n_1=n_2=\ell$ the number of operations is of $\operatorname{O}(\ell^4)$. Notice that this is a marked improvement over the exponential complexity $\ell^\ell$ reported by \cite[Theorem 4.2]{glueck2008fast}. Figure \ref{fig:comp_bolshev_noe} illustrates the observed execution time for calculating $f(\ell):=\Psi(\ell,\ell)$ and $b_i \equiv b_i^{(n)}:=0.05 \times i / n$, where $n = n_1 + n_2 = 2 \ell$. 

For not necessarily equal sample sizes $n_1 \geq 1$ and $n_2 \geq 1$, our implementation of Algorithm \ref{alg:bolshev_general} needs 
$f(n_1,n_2):= 1.5 \cdot n_1^2 \cdot n_2^2 + 4.5 \cdot (n_1^2 \cdot n_2 + n_1 \cdot n_2^2) + 
3 \cdot (n_1 + n_1^2 + n_2 + n_2^2) + 7.5 \cdot n_1 \cdot n_2 + 2$ arithmetic operations.
\end{remark}

Since the cdf $F$ of many interesting distributions is not available in a closed form the thresholds $F(b_i)$ might either not be exactly calculable or simply not exactly representable as rational numbers. 
Lemma \ref{lemma:inexact_thresholds} analyzes the error propagation when $F$ and / or $\mathbf{b}$ are inexact.

\begin{lemma}\label{lemma:inexact_thresholds}
	Let \begin{equation}
		x_i:=\begin{cases}
			b_1&\text{if }i=1\\
			b_{i}-b_{i-1}&\text{if }1<i\leq n\\
			F(b_1)&\text{if }i=n+1\\
			F(b_{i})-F(b_{i-1})&\text{if }n+1<i\leq 2n
		\end{cases}\label{eq:inexact_thresholds_x_def}
	\end{equation} and denote by $\left(\tilde x_i\right)_{i\in[n]}$ approximations thereof, which are obtained by replacing $\left(b_i\right)_{i\in[n]}$ and $F$ by approximations $(\tilde b_i)_{i\in[n]}$ and $\tilde F$.
	If for $\varepsilon\in(0,1)$ it holds that for $\forall i\in[2n]:x_i\in(1-\varepsilon,1+\varepsilon)$ the it follows that for all $(i_1,i_2)\in[n_1]\times[n_2]$ 
	\begin{equation*}
        \tilde\Psi(i_1,i_2)\in \Psi(i_1,i_2)\cdot\left((1-\varepsilon)^{i_1+i_2},(1+\varepsilon)^{i_1+i_2}\right),
    \end{equation*} 
		where $\tilde\Psi$ denotes the approximation of $\Psi$ obtained by using $\tilde F$ and $\tilde b_i$ instead of $F$ and $b_i$.
\end{lemma}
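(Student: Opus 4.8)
The plan is to exploit the purely algebraic structure of Noe's recursion from Lemma~\ref{lemma:generalized_noe_rec}. First I would observe that, with $b_0:=0$ and (as we may assume) $F(0)=0$, every quantity $b_m-b_{m-1}$ appearing in the coefficient $M^{i_1,i_2}_{k_1,k_2}(m)$ equals $x_m$ and every quantity $F(b_m)-F(b_{m-1})$ equals $x_{n+m}$, while the initial values satisfy $Q_{i_1,i_2}(1)=b_1^{i_1}F(b_1)^{i_2}=x_1^{i_1}x_{n+1}^{i_2}$. Since the recursion defining $Q_{i_1,i_2}(m)$ consists only of sums of products of non-negative binomial coefficients with powers of these increments, it follows immediately that each $Q_{i_1,i_2}(m)$, and in particular $\Psi(i_1,i_2)=Q_{i_1,i_2}(i_1+i_2)$, is a polynomial in the variables $x_1,\hdots,x_{2n}$ with non-negative coefficients.

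The key structural step I would then establish, by induction on $m$, is that $Q_{i_1,i_2}(m)$ is \emph{homogeneous of degree $i_1+i_2$} in $x_1,\hdots,x_{2n}$. The base case $m=1$ is immediate, the monomial $x_1^{i_1}x_{n+1}^{i_2}$ having degree $i_1+i_2$. For the inductive step I would use that, by \eqref{eq:noe_coeff_rec}, the coefficient $M^{i_1,i_2}_{k_1,k_2}(m)$ is a binomial constant times $x_m^{i_1-k_1}x_{n+m}^{i_2-k_2}$, hence homogeneous of degree $(i_1-k_1)+(i_2-k_2)$; multiplying by $Q_{k_1,k_2}(m-1)$, homogeneous of degree $k_1+k_2$ by the induction hypothesis, produces terms of degree $i_1+i_2$, and summing preserves this. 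The summation constraint $m-1\leq k_1+k_2$ only deletes monomials and never alters degrees, so the conclusion is unaffected. Thus $\Psi(i_1,i_2)$ is a homogeneous polynomial of degree $i_1+i_2$ with non-negative coefficients, evaluated at the (strictly positive, since $b$ and $F$ are increasing) arguments $x_1,\hdots,x_{2n}$.

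With this in hand the error bound is routine. Writing each perturbed increment as $\tilde x_i=x_i(1+\delta_i)$ with $|\delta_i|<\varepsilon$ (the hypothesis), so that $1+\delta_i\in(1-\varepsilon,1+\varepsilon)\subset(0,\infty)$, any monomial $\prod_{i}x_i^{\alpha_i}$ of total degree $\sum_i\alpha_i=i_1+i_2$ is multiplied, upon passing to $\tilde x_i$, by the factor $\prod_i(1+\delta_i)^{\alpha_i}$. Since $t\mapsto t^{\alpha_i}$ is increasing on $(0,\infty)$ and the exponents sum to $i_1+i_2$, this factor lies strictly in $\left((1-\varepsilon)^{i_1+i_2},(1+\varepsilon)^{i_1+i_2}\right)$, \emph{independently of the monomial}. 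Writing $\Psi(i_1,i_2)=\sum_\alpha c_\alpha\prod_i x_i^{\alpha_i}$ with $c_\alpha\geq0$ and each monomial value positive, I would then bound $\tilde\Psi(i_1,i_2)=\sum_\alpha c_\alpha\bigl(\prod_i x_i^{\alpha_i}\bigr)\prod_i(1+\delta_i)^{\alpha_i}$ termwise and sum, which immediately gives $\tilde\Psi(i_1,i_2)\in\Psi(i_1,i_2)\cdot\left((1-\varepsilon)^{i_1+i_2},(1+\varepsilon)^{i_1+i_2}\right)$.

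The only genuinely substantive point is the homogeneity claim; once the degree of $\Psi(i_1,i_2)$ is pinned down to be exactly $i_1+i_2$, the perturbation estimate is the standard fact that a non-negatively weighted positive combination of degree-$d$ monomials inherits the multiplicative error $(1\pm\varepsilon)^d$. I expect the main obstacle to be the careful bookkeeping in the induction: confirming that $b_0=0$ and $F(0)=0$ make the $m=1$ coefficients honest powers of $x_1$ and $x_{n+1}$, and verifying that the restricted summation range does not spoil homogeneity. The positivity of the increments, needed only to obtain the open rather than closed interval, is a side condition furnished by the strict monotonicity of $b$ and $F$.
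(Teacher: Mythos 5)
Your proof is correct and follows essentially the same route as the paper: both arguments use Noe's recursion to recognize $\Psi(i_1,i_2)$ as a polynomial with non-negative coefficients in the increments $x_1,\hdots,x_{2n}$ and then bound the perturbation monomial by monomial, summing the termwise bounds. The only differences are cosmetic — your homogeneity induction is more than is needed, since it suffices (and is what the paper uses) that every monomial has total degree at most $i_1+i_2$, because $\left((1-\varepsilon)^{j},(1+\varepsilon)^{j}\right)\subset\left((1-\varepsilon)^{i_1+i_2},(1+\varepsilon)^{i_1+i_2}\right)$ for $j\leq i_1+i_2$; and the paper phrases the final termwise estimate in the language of interval arithmetic rather than writing $\tilde x_i=x_i(1+\delta_i)$ explicitly.
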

\section{Faithfully Rounded Evaluation of Noe's Recursion}\label{sec:noe_faithful}

We implemented faithfully rounded\footnote{That is, the result is either exact (if the exact value is a floating point number) or it is one of the two closest floating point numbers.} floating-point computations as described by \cite{rump2017} as a portable single-header C++11 library.\footnote{Available at \url{https://github.com/jvschroeder/PairArithmetic/}.} Utilizing this library we implemented the generalization of Noe's recursion presented in Section \ref{sec33} obtaining faithfully rounded results if no underflow occurs.\footnote{In our experience this is usually the case if the values of $\Psi$ are not too close to the smallest (in absolute value) normal double, which equals $2^{-1022}\approx 2.225\cdot 10^{-308}$ on most computer architectures.} In case of an underflow the results are smaller than the true values of $\Psi$, but never less than zero. Parallelization was implemented using \href{https://www.threadingbuildingblocks.org/}{Intel\textregistered Threading Building Blocks (TBB)}.

Notice that Noe's recursion (and our generalization thereof) satisfies the NIC principle (\textit{No Inaccurate Cancellation}, cf. \cite[Definition 2.2]{rump2017}), that is there are no sums where at least one summand is not an input to the algorithm and the summands have opposite signs. Thus, by examining the evaluation tree (cf. \cite[Definition 2.2]{rump2017}) of a concrete implementation, it is possible to
calculate a number $k = k(n_1, n_2)$ according to Equation (11) of \cite{rump2017}. The result will be faithfully rounded if  no under- or overflow occurs, and $k\leq 2^{26}-2$ (when utilising a double precision floating point numbers). For our concrete implementation we obtain $k(n_1,n_2) =n_1\cdot n_2+8\cdot(n_1+n_2)-7$, provided that $n_1+n_2\geq 2$. Thus (assuming that no over- or underflow occurs) the result is guaranteed to be faithfully rounded if $n_1,n_2\leq 8184$. Our implementation could be, in terms of $k$, significantly improved by using binary summation. For example, for $n_1=n_2=400$ we obtain $k(400,400)=166{,}398$, while the corresponding number of $k$ in the case of binary summation would equal $17{,}421$. The latter improvement however comes at an additional computational cost, and may be considered mostly of theoretical interest since the calculation for $n_1=n_2=400$ already takes approximately $27$ minutes on a $4$ core Intel CPU. Figure \ref{fig:comp_bolshev_noe} compares the runtime of our implementation of our generalization of Noe's recursion to that of the algorithm from the previous section. It becomes apparent that Noe's recursion with faithful rounding is much faster then Bolshev's recursion implemented in rational arithmetic. For practical applications, we therefore recommend Noe's recursion with faithful rounding, at least if a fixed numerical precision is sufficient.

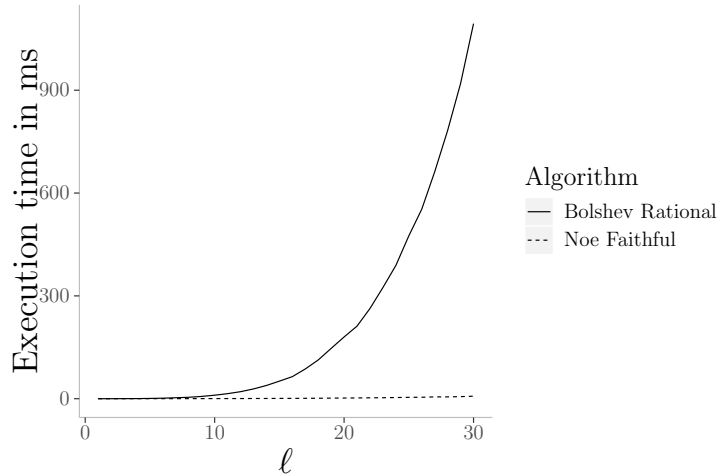
\begin{figure}[!htb]
	\centering
    \resizebox{.7\linewidth}{!}{% Created by tikzDevice version 0.12 on 2018-12-20 13:43:24
% !TEX encoding = UTF-8 Unicode
\begin{tikzpicture}[x=1pt,y=1pt]
\definecolor{fillColor}{RGB}{255,255,255}
\begin{scope}
\definecolor{drawColor}{RGB}{255,255,255}
\definecolor{fillColor}{RGB}{255,255,255}

\path[draw=drawColor,line width= 0.6pt,line join=round,line cap=round,fill=fillColor] (  0.00, 56.47) rectangle (361.35,304.88);
\end{scope}
\begin{scope}
\definecolor{drawColor}{RGB}{255,255,255}

\path[draw=drawColor,line width= 0.3pt,line join=round] ( 38.67,125.56) --
	(248.11,125.56);

\path[draw=drawColor,line width= 0.3pt,line join=round] ( 38.67,177.76) --
	(248.11,177.76);

\path[draw=drawColor,line width= 0.3pt,line join=round] ( 38.67,229.96) --
	(248.11,229.96);

\path[draw=drawColor,line width= 0.3pt,line join=round] ( 38.67,282.16) --
	(248.11,282.16);

\path[draw=drawColor,line width= 0.3pt,line join=round] ( 74.45, 89.95) --
	( 74.45,299.38);

\path[draw=drawColor,line width= 0.3pt,line join=round] (140.11, 89.95) --
	(140.11,299.38);

\path[draw=drawColor,line width= 0.3pt,line join=round] (205.76, 89.95) --
	(205.76,299.38);

\path[draw=drawColor,line width= 0.6pt,line join=round] ( 38.67, 99.47) --
	(248.11, 99.47);

\path[draw=drawColor,line width= 0.6pt,line join=round] ( 38.67,151.66) --
	(248.11,151.66);

\path[draw=drawColor,line width= 0.6pt,line join=round] ( 38.67,203.86) --
	(248.11,203.86);

\path[draw=drawColor,line width= 0.6pt,line join=round] ( 38.67,256.06) --
	(248.11,256.06);

\path[draw=drawColor,line width= 0.6pt,line join=round] ( 41.63, 89.95) --
	( 41.63,299.38);

\path[draw=drawColor,line width= 0.6pt,line join=round] (107.28, 89.95) --
	(107.28,299.38);

\path[draw=drawColor,line width= 0.6pt,line join=round] (172.93, 89.95) --
	(172.93,299.38);

\path[draw=drawColor,line width= 0.6pt,line join=round] (238.59, 89.95) --
	(238.59,299.38);
\definecolor{drawColor}{RGB}{0,0,0}

\path[draw=drawColor,line width= 0.6pt,line join=round] ( 48.19, 99.47) --
	( 54.76, 99.47) --
	( 61.32, 99.50) --
	( 67.89, 99.53) --
	( 74.45, 99.61) --
	( 81.02, 99.73) --
	( 87.58, 99.96) --
	( 94.15,100.19) --
	(100.71,100.63) --
	(107.28,101.25) --
	(113.85,102.03) --
	(120.41,103.03) --
	(126.98,104.43) --
	(133.54,106.18) --
	(140.11,108.39) --
	(146.67,110.66) --
	(153.24,114.56) --
	(159.80,119.13) --
	(166.37,125.01) --
	(172.93,130.81) --
	(179.50,136.32) --
	(186.06,145.27) --
	(192.63,155.91) --
	(199.20,167.13) --
	(205.76,182.12) --
	(212.33,195.69) --
	(218.89,214.66) --
	(225.46,235.66) --
	(232.02,259.51) --
	(238.59,289.86);

\path[draw=drawColor,line width= 0.6pt,dash pattern=on 2pt off 2pt ,line join=round] ( 48.19, 99.47) --
	( 54.76, 99.47) --
	( 61.32, 99.48) --
	( 67.89, 99.48) --
	( 74.45, 99.48) --
	( 81.02, 99.49) --
	( 87.58, 99.50) --
	( 94.15, 99.51) --
	(100.71, 99.52) --
	(107.28, 99.53) --
	(113.85, 99.54) --
	(120.41, 99.55) --
	(126.98, 99.58) --
	(133.54, 99.59) --
	(140.11, 99.62) --
	(146.67, 99.64) --
	(153.24, 99.67) --
	(159.80, 99.69) --
	(166.37, 99.77) --
	(172.93, 99.80) --
	(179.50, 99.86) --
	(186.06, 99.93) --
	(192.63, 99.96) --
	(199.20,100.06) --
	(205.76,100.17) --
	(212.33,100.21) --
	(218.89,100.39) --
	(225.46,100.39) --
	(232.02,100.52) --
	(238.59,100.74);
\end{scope}
\begin{scope}
\definecolor{drawColor}{RGB}{190,190,190}

\path[draw=drawColor,line width= 0.6pt,line join=round] ( 38.67, 89.95) --
	( 38.67,299.38);
\end{scope}
\begin{scope}
\definecolor{drawColor}{gray}{0.30}

\node[text=drawColor,anchor=base east,inner sep=0pt, outer sep=0pt, scale=  0.88] at ( 33.72, 96.43) {0};

\node[text=drawColor,anchor=base east,inner sep=0pt, outer sep=0pt, scale=  0.88] at ( 33.72,148.63) {300};

\node[text=drawColor,anchor=base east,inner sep=0pt, outer sep=0pt, scale=  0.88] at ( 33.72,200.83) {600};

\node[text=drawColor,anchor=base east,inner sep=0pt, outer sep=0pt, scale=  0.88] at ( 33.72,253.03) {900};
\end{scope}
\begin{scope}
\definecolor{drawColor}{gray}{0.20}

\path[draw=drawColor,line width= 0.6pt,line join=round] ( 35.92, 99.47) --
	( 38.67, 99.47);

\path[draw=drawColor,line width= 0.6pt,line join=round] ( 35.92,151.66) --
	( 38.67,151.66);

\path[draw=drawColor,line width= 0.6pt,line join=round] ( 35.92,203.86) --
	( 38.67,203.86);

\path[draw=drawColor,line width= 0.6pt,line join=round] ( 35.92,256.06) --
	( 38.67,256.06);
\end{scope}
\begin{scope}
\definecolor{drawColor}{RGB}{190,190,190}

\path[draw=drawColor,line width= 0.6pt,line join=round] ( 38.67, 89.95) --
	(248.11, 89.95);
\end{scope}
\begin{scope}
\definecolor{drawColor}{gray}{0.20}

\path[draw=drawColor,line width= 0.6pt,line join=round] ( 41.63, 87.20) --
	( 41.63, 89.95);

\path[draw=drawColor,line width= 0.6pt,line join=round] (107.28, 87.20) --
	(107.28, 89.95);

\path[draw=drawColor,line width= 0.6pt,line join=round] (172.93, 87.20) --
	(172.93, 89.95);

\path[draw=drawColor,line width= 0.6pt,line join=round] (238.59, 87.20) --
	(238.59, 89.95);
\end{scope}
\begin{scope}
\definecolor{drawColor}{gray}{0.30}

\node[text=drawColor,anchor=base,inner sep=0pt, outer sep=0pt, scale=  0.88] at ( 41.63, 78.94) {0};

\node[text=drawColor,anchor=base,inner sep=0pt, outer sep=0pt, scale=  0.88] at (107.28, 78.94) {10};

\node[text=drawColor,anchor=base,inner sep=0pt, outer sep=0pt, scale=  0.88] at (172.93, 78.94) {20};

\node[text=drawColor,anchor=base,inner sep=0pt, outer sep=0pt, scale=  0.88] at (238.59, 78.94) {30};
\end{scope}
\begin{scope}
\definecolor{drawColor}{RGB}{0,0,0}

\node[text=drawColor,anchor=base,inner sep=0pt, outer sep=0pt, scale=  1.50] at (143.39, 61.58) {$\ell$};
\end{scope}
\begin{scope}
\definecolor{drawColor}{RGB}{0,0,0}

\node[text=drawColor,rotate= 90.00,anchor=base,inner sep=0pt, outer sep=0pt, scale=  1.50] at ( 17.19,194.66) {Execution time in ms};
\end{scope}
\begin{scope}
\definecolor{fillColor}{RGB}{255,255,255}

\path[fill=fillColor] (259.11,167.20) rectangle (355.85,222.13);
\end{scope}
\begin{scope}
\definecolor{drawColor}{RGB}{0,0,0}

\node[text=drawColor,anchor=base west,inner sep=0pt, outer sep=0pt, scale=  1.10] at (264.61,208.08) {Algorithm};
\end{scope}
\begin{scope}
\definecolor{drawColor}{RGB}{255,255,255}
\definecolor{fillColor}{gray}{0.95}

\path[draw=drawColor,line width= 0.6pt,line join=round,line cap=round,fill=fillColor] (264.61,187.15) rectangle (279.06,201.61);
\end{scope}
\begin{scope}
\definecolor{drawColor}{RGB}{0,0,0}

\path[draw=drawColor,line width= 0.6pt,line join=round] (266.05,194.38) -- (277.62,194.38);
\end{scope}
\begin{scope}
\definecolor{drawColor}{RGB}{255,255,255}
\definecolor{fillColor}{gray}{0.95}

\path[draw=drawColor,line width= 0.6pt,line join=round,line cap=round,fill=fillColor] (264.61,172.70) rectangle (279.06,187.15);
\end{scope}
\begin{scope}
\definecolor{drawColor}{RGB}{0,0,0}

\path[draw=drawColor,line width= 0.6pt,dash pattern=on 2pt off 2pt ,line join=round] (266.05,179.93) -- (277.62,179.93);
\end{scope}
\begin{scope}
\definecolor{drawColor}{RGB}{0,0,0}

\node[text=drawColor,anchor=base west,inner sep=0pt, outer sep=0pt, scale=  0.88] at (284.56,191.35) {Bolshev Rational};
\end{scope}
\begin{scope}
\definecolor{drawColor}{RGB}{0,0,0}

\node[text=drawColor,anchor=base west,inner sep=0pt, outer sep=0pt, scale=  0.88] at (284.56,176.90) {Noe Faithful};
\end{scope}
\end{tikzpicture}}
	\caption{Comparison of the runtime of Algorithm \ref{alg:bolshev_general} (where $n_1 = n_2 = \ell$) implemented in rational arithmetic with Noe's recursion implemented in faithfully rounded floating point arithmetic.}
	\label{fig:comp_bolshev_noe}
\end{figure}
\section{Applications in Multiple Hypothesis Testing} \label{sec6}

As discussed by \cite{roquain2011}, the values of $\Psi(i_1,i_2)$ for all $0\leq i_1\leq n_1$ and $0\leq i_2\leq n_2$ are important building blocks for calculating the joint distribution of the number of rejections $R$and the number of false rejections $V$ for step-up multiple tests. The random variables $V$ and $R$ play an important role when analyzing the type I and type II error behavior of such multiple tests. 
One important observation is, that the previously discussed recursions for calculating $\Psi(n_1,n_2)$ also calculate all such $\Psi(i_1,i_2)$'s as intermediate results.  

Following \cite{blanchard2014least} we consider $m\geq 2$ null hypotheses $H_1,\hdots, H_m$ which are simultaneously under consideration under one and the same statistical model. We assume that associated $p$-values $p_1,\hdots,p_m$ are available on which the multiple test operates. Furthermore, we assume that 
$p_1,\hdots,p_m$ (regarded as random variables) are jointly distributed according to one of the following models
\begin{description}
    \item[$\text{FM}(m,m_0,F)$] The $p_i$'s are stochastically independent with marginal distributions \begin{align*}
            p_i\sim\begin{cases}
                \text{Uni}[0,1]&\text{if }1\leq i\leq m_0,\\
                F&\text{if }m_0+1\leq i\leq m,
            \end{cases}
        \end{align*}
where $m_0$ denotes the number of true null hypotheses among 	$H_1,\hdots, H_m$ and $F$ is a given continuous cdf on $[0, 1]$.			
    \item[$\text{RM}(m,\pi_0,F)$] Let $M_0$ denote a binomially distributed random variable, \linebreak[4] 
		$M_0\sim\mathcal B(m,\pi_0)$. Conditionally on $M_0=m_0$, the $p_i$'s are jointly distributed according to $\text{FM}(m,m_0,F)$.
\end{description} 

A multiple test operating on $\mathbf{p} = (p_1, \hdots, p_m)^\top$ is a measurable mapping $\varphi:[0,1]^m\rightarrow\mathcal P([m])$, where hypothesis $H_i$ is rejected iff $i\in\varphi(\mathbf{p})$. Under $\text{FM}(m,m_0,F)$ denote by $M_0 \equiv m_0$ a constant random variable. Then the (random) number of rejections of the multiple test $\varphi$ is given by $R(\varphi, \mathbf{p}):=\left|\varphi(\mathbf{p})\right|$, and $V(\varphi, \mathbf{p}):=\left|\varphi(\mathbf{p})\cap[M_0]\right|$ is the (random) number of false rejections (type I errors).

In the following we will consider step-up procedures $\varphi=\text{SU}_\mathbf{t}$ with critical values 
$\mathbf{t} = (t_1, \hdots, t_m)^\top \in(0,1)^m$ such that $t_1\leq \hdots \leq t_m$. The corresponding decision rule can be written as
\begin{align*}
\text{SU}_\mathbf{t}(\mathbf{p})&:=\left[\max\left(\left\{0\right\}\cup\left\{i\in[m]: 
p_{i:m}\leq t_i\right\}\right)\right], \text{~~where~~} [0] := \emptyset.
\end{align*}

Summarizing results of \cite{roquain2011}, the joint distribution of $R$ and $V$ for any step-up procedure $\text{SU}_\mathbf{t}$ has the following properties.

\begin{lemma}\label{lemma:exact_dist}
Let $0 \leq j\leq k\leq m$.
\begin{enumerate}
\item[(i)]
    Under the unconditional model $RM(m,\pi_0,F)$ it holds that
    \begin{align*}
        &\mathbb{P}_{m, \pi_0,F}\left(V(\text{SU}_\mathbf{t}, \mathbf{p}) =j,R(\text{SU}_\mathbf{t}, \mathbf{p})=k\right)\\
        &=\binom{m}{k}\binom{k}{j}\tilde\pi_0^j(1-\tilde\pi_0)^{k-j}G(t_k)^k\Psi_{m-k,0}^{\text{Uni}[0,1], F}(1-G(t_m),\hdots,1-G(t_{k+1}))
    \end{align*} where $\tilde\pi_0:=\pi_0t_k/G(t_k)$ and $G(t):=\pi_0t+(1-\pi_0)F(t)$.
    
 \item[(ii)] Under the conditional model $FM(m,m_0,F)$ it holds that\begin{align*}
        &\mathbb{P}_{m,m_0,F}\left(V(\text{SU}_\mathbf{t}, \mathbf{p})=j,R(\text{SU}_\mathbf{t}, \mathbf{p})=k\right)\\
        &=\binom{m_0}{j}\binom{m-m_0}{k-j}t_k^j\left( F(t_k)\right)^{k-j}\Psi^{\text{Uni}[0,1],\bar F}_{\substack{m-k-(m_0-j),\\ m_0-j}}(1-t_m,\hdots,1-t_{k+1}),
    \end{align*} where $0\leq j\leq m_0$ and $\bar{F}(t) := 1 - F(t)$.
		\end{enumerate}
\end{lemma}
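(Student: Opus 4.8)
The plan is to prove both parts by decomposing the event $\{V=j,R=k\}$ according to which hypotheses are rejected, exploiting exchangeability within the null and the alternative groups, and then factorising the resulting probability into a ``rejected'' part and a ``non-rejected'' part that live on disjoint, hence independent, coordinates. First I would rewrite the stopping rule in terms of order statistics: since the number of rejections of $\mathrm{SU}_{\mathbf t}$ is $R=\max(\{0\}\cup\{i:p_{i:m}\le t_i\})$, the event $\{R=k\}$ equals $\{p_{k:m}\le t_k\}\cap\bigcap_{i=k+1}^{m}\{p_{i:m}>t_i\}$ (no condition on indices below $k$ is needed, this being the defining feature of a step-up rule), and on this event the rejected hypotheses are exactly those carrying the $k$ smallest $p$-values.

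For part (ii), where $p_1,\dots,p_{m_0}\sim\text{Uni}[0,1]$ and $p_{m_0+1},\dots,p_m\sim F$ are independent, I would fix a candidate rejected set $S$ with $|S|=k$ and $|S\cap[m_0]|=j$ and write $\{V=j,R=k\}$ as the disjoint union over all such $S$. By exchangeability within each group every such $S$ contributes the same probability, which produces the prefactor $\binom{m_0}{j}\binom{m-m_0}{k-j}$. For part (i), where the $p_i$ are i.i.d.\ with common cdf $G=\pi_0\,\mathrm{id}+(1-\pi_0)F$, I would instead choose the rejected set among all $m$ indices ($\binom{m}{k}$ ways) and, conditionally on a $p$-value lying below $t_k$, use Bayes' rule to see that it stems from a true null with probability $\tilde\pi_0=\pi_0 t_k/G(t_k)$, independently across the rejected indices; the number of true nulls among the $k$ rejected is then $\mathrm{Binomial}(k,\tilde\pi_0)$, giving the factor $\binom{m}{k}\binom{k}{j}\tilde\pi_0^{\,j}(1-\tilde\pi_0)^{k-j}$. (Part (i) can alternatively be recovered from part (ii) by averaging over $M_0\sim\mathcal B(m,\pi_0)$ and applying a Vandermonde-type identity.)

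The \emph{key step} is the factorisation. On $\{R=k\}$ with rejected set $S$, all rejected $p$-values are $\le t_k$, whereas the ordered non-rejected values must exceed $t_{k+1},\dots,t_m$; because the thresholds are increasing, so $t_{k+1}\ge t_k$, this exceedance condition already forces every non-rejected value above every rejected one, and hence the joint requirement ``$p_i\le t_k$ for $i\in S$'' together with the exceedance event on $\{p_i:i\notin S\}$ is \emph{equivalent} to ``rejected set $=S$ and $R=k$''. Since these two events depend on disjoint coordinates, independence splits the probability into a product, and no separate separation condition needs to be imposed. The rejected part is then a product of marginal probabilities, namely $t_k^{\,j}F(t_k)^{k-j}$ in part (ii) and $G(t_k)^k$ in part (i), which after combination with the counting factors above yields the stated prefactors.

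It remains to identify the non-rejected part -- the probability that the order statistics of the $m-k$ surviving variables all \emph{exceed} $t_{k+1},\dots,t_m$ -- with the quantity $\Psi$ from \eqref{eq:two_group_order_statistic}, which is a \emph{lower}-orthant quantity. For this I would apply the reflection $p\mapsto 1-p$, which converts the upper-orthant order-statistic event into a lower-orthant one, reverses the threshold vector into $(1-t_m,\dots,1-t_{k+1})$, maps a $\text{Uni}[0,1]$ variable to a $\text{Uni}[0,1]$ variable, and sends an $F$-variable to a reflected variable that, evaluated at the reversed arguments, generates the factors $\bar F=1-F$; in part (i) I would first apply the probability-integral transform $p\mapsto G(p)$ to render the survivors uniform, collapsing the expression to the one-group $\Psi_{m-k,0}$ at $(1-G(t_m),\dots,1-G(t_{k+1}))$. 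The hard part will be the bookkeeping of precisely this reflection: note that $1-F$ is decreasing and hence not itself a cdf, so the symbol $\Psi^{\text{Uni}[0,1],\bar F}(1-t_m,\dots)$ must be interpreted through the reflection/duality convention, and the genuine technical obstacle is to verify that this reading reproduces the claimed identity exactly, i.e.\ that the reflected mixture is consistently described by $\bar F$ at the reversed arguments and that the indices $n_1,n_2$ land as stated, in tandem with the rigorous justification (above) that the exceedance condition makes the rejected/non-rejected split automatic and thereby legitimises the product form.
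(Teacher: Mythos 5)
Your route is genuinely different from the paper's: the paper does not prove this lemma at all in a self-contained way, but simply invokes \cite[Theorem 3.1]{roquain2011} for part (i) and \cite[Section 5.3]{roquain2011} for part (ii), restating the formulas. Your direct argument --- writing $\{R=k\}=\{p_{k:m}\leq t_k\}\cap\bigcap_{i=k+1}^{m}\{p_{i:m}>t_i\}$, decomposing over rejected sets, observing that monotonicity of the thresholds ($t_{k+1}\geq t_k$) makes the rejected/non-rejected separation automatic so that independence factorizes the probability, and handling the mixture model in (i) by Bayes' rule on the latent null/alternative labels --- is sound, and for part (i) it lands exactly on the stated formula: the transformation $p\mapsto 1-G(p)$ makes the $m-k$ survivors uniform and turns the exceedance event into the lower-orthant event at $(1-G(t_m),\hdots,1-G(t_{k+1}))$, while $\binom{k}{j}\tilde\pi_0^{j}(1-\tilde\pi_0)^{k-j}G(t_k)^{k}=\binom{k}{j}(\pi_0t_k)^{j}\bigl((1-\pi_0)F(t_k)\bigr)^{k-j}$ is precisely the rejected-part probability.

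However, the bookkeeping you defer in part (ii) --- ``that the indices $n_1,n_2$ land as stated'' --- does not come out as stated, so your caution about $\bar F$ is the tip of a real problem rather than a formality. Carrying out the reflection $u_i=1-p_i$: the $m_0-j$ surviving uniform $p$-values stay uniform, and the $m-k-(m_0-j)$ surviving $F$-distributed $p$-values acquire the cdf $t\mapsto 1-F(1-t)$ (this, not $1-F(t)$, is the correct reading of $\bar F$; it is in fact the definition used in the paper's appendix proof, which contradicts the definition in the lemma statement). Hence the non-rejected factor is $\Psi^{\text{Uni}[0,1],\bar F}_{m_0-j,\;m-k-(m_0-j)}(1-t_m,\hdots,1-t_{k+1})$ with $\bar F(t)=1-F(1-t)$, i.e.\ the group sizes attach in the \emph{opposite} order from the lemma as printed, which assigns $m-k-(m_0-j)$ variables to $\text{Uni}[0,1]$ and $m_0-j$ to $\bar F$. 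A one-line sanity check confirms this: for $m=m_0=1$ and $k=j=0$ the true value is $\mathbb{P}(p_1>t_1)=1-t_1$, which the assignment $\Psi^{\text{Uni}[0,1],\bar F}_{1,0}(1-t_1)$ reproduces, whereas the printed assignment $\Psi^{\text{Uni}[0,1],\bar F}_{0,1}(1-t_1)$ gives $1-F(t_1)$ (with $\bar F(t)=1-F(1-t)$) or $1-F(1-t_1)$ (with $\bar F(t)=1-F(t)$), both wrong for generic $F$. So your strategy is correct and, completed honestly, it proves a corrected version of (ii) rather than the statement verbatim; you should finish the reflection computation explicitly, record the indices $(m_0-j,\,m-k-(m_0-j))$ in that order, and use $\bar F(t)=1-F(1-t)$.
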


Combining Lemma \ref{lemma:exact_dist} with the previously discussed efficient evaluation of $\Psi$ it is possible to calculate various summary statistics pertaining to the joint distribution of $(V,R,M_0)$ under the above models.

\begin{definition} $ $
\begin{itemize}
    \item[(a)] The FDR of $\text{SU}_\mathbf{t}(\mathbf{p})$ is given by the expectation of the false discovery proportion (FDP) of $\text{SU}_\mathbf{t}(\mathbf{p})$, which is given by 
		\begin{equation*}
     \operatorname{FDP}(\text{SU}_\mathbf{t}, \mathbf{p}) := \frac{V(\text{SU}_\mathbf{t},\mathbf{p})}{R(\text{SU}_\mathbf{t}, \mathbf{p})\imax 1}.
		\end{equation*}
    \item[(b)] 
		Considering the number of correct rejections $R(\text{SU}_\mathbf{t},\mathbf{p})-V(\text{SU}_\mathbf{t},\mathbf{p})$ the average power of $\text{SU}_\mathbf{t}(\mathbf{p})$ is given by
		\begin{equation}
            \operatorname{Pow}_\text{avg}(\text{SU}_\mathbf{t}) :=\mathbb{E}\left[\frac{R(\text{SU}_\mathbf{t},\mathbf{p})-V(\text{SU}_\mathbf{t},\mathbf{p})}{m-M_0}\right],\label{eq:avg_power}
        \end{equation}
				where the convention $\frac{0}{0}=0$ is utilized and where $\mathbb E =\mathbb E_{m,\pi_0,F}$ (under $RM(m,\pi_0,F)$) or $\mathbb E =\mathbb E_{m,m_0,F}$ (under $FM(m,m_0,F)$), respectively.
		\item[(c)] The $\lambda-$power is the probability of rejecting at least $\lambda\cdot(m-M_0)$ of the false hypotheses:\begin{equation}
            \operatorname{Pow}_\lambda(\text{SU}_\mathbf{t}):=\mathbb{P}\left(\frac{R(\text{SU}_\mathbf{t},\mathbf{p})-V(\text{SU}_\mathbf{t},\mathbf{p})}{m-M_0}\geq\lambda\right)\label{eq:lambda_power}
        \end{equation} where, again, the convention $\frac{0}{0}=0$ is utilised and where $\mathbb P =\mathbb P_{m,\pi_0,F}$ (under $RM(m,\pi_0,F)$) or $\mathbb P =\mathbb P_{m,m_0,F}$ (under $FM(m,m_0,F)$), respectively.
\end{itemize}
\end{definition}				
				
In order to provide some numerical illustrations, we first consider the average power (cf.\ \eqref{eq:avg_power}) under $FM(m,m_0,F)$ where \begin{equation}
F(t):=1+\Phi\left(\Phi^{-1}\left(\frac{t}{2}\right)-\sqrt{N}\right)-\Phi\left(\Phi^{-1}\left(1-\frac{t}{2}\right)-\sqrt{N}\right)
\label{eq:model_glueck}
\end{equation} and $N=5$. This is the setting considered in \cite[Table 2]{glueck2008exact} where the average power for $m\leq 5$ was calculated for the Benjamini-Hochberg procedure (controlling the FDR at $\alpha=0.05$) for $m$ independent two-sided one sample z-tests. In our notation, the Benjamini-Hochberg (linear step-up test) procedure equals $\text{SU}_\mathbf{t}$ with $t_i = i \alpha / m$ for $i \in [m]$. Table \ref{table:average_power_glueck} illustrates the results obtained for $m,m_0\leq 50$. Due to space constraints only the first six columns  (corresponding to $m_0\leq 5$) are presented. The calculation of the full table (not presented here) took less than a second for an $m$ one magnitude larger (50 instead of 5) than the one considered by \cite{glueck2008exact}. Figure \ref{fig:avg_power_time} illustrates the time needed to calculate one row of such a table corresponding to some $m\in\mathbb{N}$ when utilizing our proposed algorithms.

As a second example, we consider the computation of $\operatorname{Pow}_\lambda(\text{SU}_\mathbf{t})$ from \eqref{eq:lambda_power}. Again, we choose $\mathbf{t}$ as in the Benjamini-Hochberg case. An asymptotic approximation of this quantity for
\[F(t):=F_{\nu,\mu}\left(F_{\nu,0}^{-1}\left(\frac{t}{2}\right)\right)-F_{\nu,\mu}\left(-F_{\nu,0}^{-1}\left(\frac{t}{2}\right)\right),
\]
where $F_{\nu,\mu}$ denotes the distribution function of a non-central chi-squared random variable with $\nu$ degrees of freedom and non-centrality parameter $\mu$, is given in \cite[Table 3]{izmirlian2018average}. Our results can be used to calculate $\operatorname{Pow}_\lambda(\text{SU}_\mathbf{t})$. Table \ref{table:lambda_power} gives the faithfully rounded values for the $0.9$-power for the parameters considered in \cite[Table 3]{izmirlian2018average}.

We conclude by giving an example for the exact distribution of the FDP which shows why the FDR is not always an appropriate summary statistic. Consider again the multiple two-sided z-test described in \cite{glueck2008exact}, that is $F$ given by \eqref{eq:model_glueck}, for $N=m=50$ and $m_0=5$. It is clear that in Figure \ref{fig:fdp_dist} the distribution of the FDP is neither symmetric about its mean (the FDR, which is depicted as dotted vertical line) nor concentrated around the FDR. A similar argumentation has been used by, among others, \cite{blanchard2014least} and \cite{Delattre-Roquain-Romano-Wolf} in order to motivate the computation of the full distribution of the FDP and to control its quantiles. The latter task is inherently computationally demanding.

% latex table generated in R 3.4.2 by xtable 1.8-3 package
% Fri Nov 16 10:16:41 2018
\begin{table}[ht]
\centering
\renewrobustcmd{\bfseries}{\fontseries{b}\selectfont}
\begin{tabular}{lrrrrrr}
  \hline
 $m$& $m_0=0$ & $m_0=1$ & $m_0=2$ & $m_0=3$ & $m_0=4$ & $m_0=5$ \\
  \hline
2 & \textbf{0.56539} & \textbf{0.50342} &  &  &  &  \\
  3 & \textbf{0.54576} & \textbf{0.49842} & \textbf{0.44439} &  &  &  \\
  4 & \textbf{0.53446} & \textbf{0.49583} & \textbf{0.45256} & \textbf{0.40451} &  &  \\
  5 & \textbf{0.52712} & \textbf{0.49440} & \textbf{0.45819} & \textbf{0.41837} & \textbf{0.37494} &  \\
  6 & 0.52201 & 0.49357 & 0.46241 & 0.42840 & 0.39148 & 0.35175 \\
  7 & 0.51827 & 0.49310 & 0.46574 & 0.43606 & 0.40399 & 0.36955 \\
  8 & 0.51543 & 0.49285 & 0.46846 & 0.44214 & 0.41383 & 0.38349 \\
  9 & 0.51320 & 0.49273 & 0.47073 & 0.44710 & 0.42178 & 0.39473 \\
  10 & 0.51142 & 0.49270 & 0.47266 & 0.45124 & 0.42836 & 0.40398 \\
  11 & 0.50997 & 0.49272 & 0.47434 & 0.45475 & 0.43390 & 0.41174 \\
  12 & 0.50877 & 0.49278 & 0.47580 & 0.45777 & 0.43863 & 0.41833 \\
  13 & 0.50776 & 0.49286 & 0.47709 & 0.46039 & 0.44271 & 0.42401 \\
  14 & 0.50690 & 0.49295 & 0.47823 & 0.46268 & 0.44627 & 0.42895 \\
  15 & 0.50616 & 0.49306 & 0.47925 & 0.46472 & 0.44941 & 0.43329 \\
  16 & 0.50552 & 0.49316 & 0.48018 & 0.46653 & 0.45219 & 0.43712 \\
  17 & 0.50497 & 0.49327 & 0.48101 & 0.46815 & 0.45467 & 0.44053 \\
  18 & 0.50447 & 0.49338 & 0.48177 & 0.46962 & 0.45690 & 0.44358 \\
  19 & 0.50404 & 0.49348 & 0.48246 & 0.47094 & 0.45891 & 0.44634 \\
  20 & 0.50365 & 0.49358 & 0.48309 & 0.47215 & 0.46074 & 0.44883 \\
  21 & 0.50330 & 0.49368 & 0.48367 & 0.47325 & 0.46240 & 0.45109 \\
  22 & 0.50298 & 0.49378 & 0.48421 & 0.47427 & 0.46392 & 0.45316 \\
  23 & 0.50269 & 0.49387 & 0.48471 & 0.47520 & 0.46532 & 0.45505 \\
  24 & 0.50243 & 0.49395 & 0.48517 & 0.47605 & 0.46660 & 0.45679 \\
  25 & 0.50219 & 0.49404 & 0.48559 & 0.47685 & 0.46779 & 0.45840 \\
  26 & 0.50198 & 0.49412 & 0.48599 & 0.47759 & 0.46889 & 0.45988 \\
  27 & 0.50178 & 0.49420 & 0.48637 & 0.47828 & 0.46991 & 0.46126 \\
  28 & 0.50159 & 0.49427 & 0.48671 & 0.47892 & 0.47086 & 0.46254 \\
  29 & 0.50142 & 0.49434 & 0.48704 & 0.47952 & 0.47175 & 0.46373 \\
  30 & 0.50126 & 0.49441 & 0.48735 & 0.48008 & 0.47258 & 0.46485 \\
  31 & 0.50111 & 0.49447 & 0.48764 & 0.48060 & 0.47336 & 0.46589 \\
  32 & 0.50097 & 0.49453 & 0.48791 & 0.48110 & 0.47409 & 0.46687 \\
  33 & 0.50084 & 0.49459 & 0.48817 & 0.48157 & 0.47478 & 0.46779 \\
  34 & 0.50072 & 0.49465 & 0.48841 & 0.48201 & 0.47542 & 0.46866 \\
  35 & 0.50060 & 0.49470 & 0.48864 & 0.48242 & 0.47604 & 0.46948 \\
  36 & 0.50050 & 0.49475 & 0.48886 & 0.48282 & 0.47661 & 0.47025 \\
  37 & 0.50040 & 0.49480 & 0.48907 & 0.48319 & 0.47716 & 0.47098 \\
  38 & 0.50030 & 0.49485 & 0.48926 & 0.48354 & 0.47768 & 0.47167 \\
  39 & 0.50021 & 0.49489 & 0.48945 & 0.48388 & 0.47817 & 0.47233 \\
  40 & 0.50012 & 0.49494 & 0.48963 & 0.48420 & 0.47864 & 0.47295 \\
  41 & 0.50004 & 0.49498 & 0.48980 & 0.48451 & 0.47909 & 0.47354 \\
  42 & 0.49996 & 0.49502 & 0.48996 & 0.48480 & 0.47951 & 0.47411 \\
  43 & 0.49989 & 0.49506 & 0.49012 & 0.48508 & 0.47992 & 0.47465 \\
  44 & 0.49982 & 0.49509 & 0.49027 & 0.48534 & 0.48031 & 0.47516 \\
  45 & 0.49975 & 0.49513 & 0.49041 & 0.48559 & 0.48068 & 0.47565 \\
  46 & 0.49969 & 0.49516 & 0.49055 & 0.48584 & 0.48103 & 0.47612 \\
  47 & 0.49963 & 0.49520 & 0.49068 & 0.48607 & 0.48137 & 0.47657 \\
  48 & 0.49957 & 0.49523 & 0.49081 & 0.48629 & 0.48169 & 0.47700 \\
  49 & 0.49951 & 0.49526 & 0.49093 & 0.48651 & 0.48201 & 0.47741 \\
  50 & 0.49946 & 0.49529 & 0.49104 & 0.48672 & 0.48230 & 0.47781 \\
   \hline
\end{tabular}
\caption{Average Power of the Benjamini-Hochberg procedure (controlling the FDR at $\alpha=0.05$) for $m$ independent two-sided one sample z-tests (sample size $N=5$, common variance $\sigma^2=1$, $\mu_0=0,\mu_1=1$, cf.  \eqref{eq:model_glueck}) when $m_0$ hypotheses are true. The bold values are exactly those in \cite[Table 2]{glueck2008exact}.}
\label{table:average_power_glueck}
\end{table}

\begin{figure}[!htb]
	\centering
    \resizebox{.5\linewidth}{!}{\input{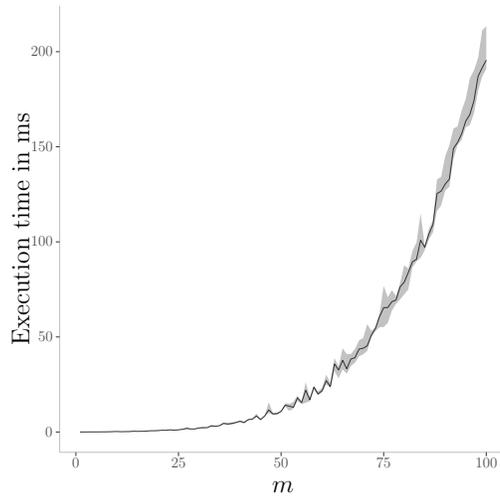}}
	\caption{Time needed to calculate the average power of the Benjamini-Hochberg procedure for $m$ hypotheses.}
	\label{fig:avg_power_time}
\end{figure}

% latex table generated in R 3.4.2 by xtable 1.8-3 package
% Tue Dec 04 07:28:40 2018
\begin{table}[ht]
\centering
\begin{tabular}{lrrrrrr}
	\hline
	& Eff Sz. $\theta$ & $\mathbb{E}(M_m)$ & n & est. $\lambda_{90}$-pwr & $\lambda_{90}$-pwr & Diff in std \\ 
	\hline
	1 & 0.60000 &      5 &     70 & 0.24900 & 0.26691 & 1.23987 \\ 
	2 & 0.60000 &      5 &     80 & 0.39600 & 0.39977 & 0.24081 \\ 
	3 & 0.60000 &      5 &     90 & 0.53800 & 0.53479 & 0.18527 \\ 
	4 & 0.60000 &      5 &    100 & 0.65700 & 0.65626 & 0.05057 \\ 
	5 & 0.60000 &     20 &     50 & 0.02800 & 0.02538 & 0.48379 \\ 
	6 & 0.60000 &     20 &     60 & 0.15700 & 0.13890 & 1.69096 \\ 
	7 & 0.60000 &     20 &     70 & 0.37800 & 0.36864 & 0.61103 \\ 
	8 & 0.60000 &     20 &     80 & 0.59900 & 0.62231 & 1.50514 \\ 
	9 & 0.60000 &     60 &     40 & 0.00200 & 0.00143 & 0.43439 \\ 
	10 & 0.60000 &     60 &     50 & 0.09900 & 0.08584 & 1.49443 \\ 
	11 & 0.60000 &     60 &     60 & 0.49200 & 0.49307 & 0.06522 \\ 
	12 & 0.60000 &    100 &     30 & 0.00000 & 0.00000 &    Inf \\ 
	13 & 0.60000 &    100 &     40 & 0.00600 & 0.00658 & 0.22041 \\ 
	14 & 0.60000 &    100 &     50 & 0.27000 & 0.30726 & 2.80406 \\ 
	15 & 0.80000 &      5 &     40 & 0.25200 & 0.26037 & 0.59552 \\ 
	16 & 0.80000 &      5 &     50 & 0.50200 & 0.49870 & 0.19588 \\ 
	17 & 0.80000 &      5 &     60 & 0.70400 & 0.70951 & 0.39928 \\ 
	18 & 0.80000 &     20 &     30 & 0.03600 & 0.03969 & 0.60858 \\ 
	19 & 0.80000 &     20 &     40 & 0.35700 & 0.36732 & 0.64187 \\ 
	20 & 0.80000 &     60 &     20 & 0.00000 & 0.00004 & 0.18826 \\ 
	21 & 0.80000 &     60 &     30 & 0.14700 & 0.15775 & 0.88238 \\ 
	22 & 0.80000 &    100 &     20 & 0.00300 & 0.00013 & 8.59402 \\ 
	23 & 0.80000 &    100 &     30 & 0.50600 & 0.48563 & 1.42549 \\ 
	24 & 1.00000 &      5 &     30 & 0.39200 & 0.39421 & 0.14658 \\ 
	25 & 1.00000 &     20 &     20 & 0.04500 & 0.04534 & 0.05420 \\ 
	26 & 1.00000 &     20 &     30 & 0.61400 & 0.63660 & 1.40588 \\ 
	27 & 1.00000 &     60 &     20 & 0.22500 & 0.19941 & 2.00156 \\ 
	28 & 1.00000 &    100 &     20 & 0.58600 & 0.57569 & 0.64138 \\ 
	\hline
\end{tabular}
\caption{Faithfully rounded calculation of the $\lambda_{90}-$power of the Benjamini Hochberg procedure (controlling the FDR at $\alpha=0.15$) when applied to $m=200$ test-statistics with chi-squared distributions $F_{\nu,0}$ under the null hypothesis and $F_{\nu,\mu}$ under the alternative where $\nu=2n-2$, $\mu=\sqrt{\frac{n}{2}}\theta$. The results are under the RM-model where $\pi_0=\frac{\mathbb{E}(M_m)}{m}$. For comparison the fourth column contains the Monte Carlo estimates  (sample size $1000$) given in \cite[Table 3]{izmirlian2018average}. The following column gives the faithfully rounded power calculated using our method and the last column states the absolute difference between the previous two columns dived by the standard deviation of the Monte Carlo estimation (which was estimated using 300 replicates). As expected most Monte Carlo approximations are within one or two standard deviations of the faithfully rounded result. For the twelfth row no value is given since the estimated standard deviation was zero. This is not unexpected since the faithfully rounded result is $\approx4.58\cdot10^{-7}$ which is two orders magnitude smaller than $(300\cdot 1000)^{-1}$.}
\label{table:lambda_power}
\end{table}

\begin{figure}[!htb]
	\centering
    \input{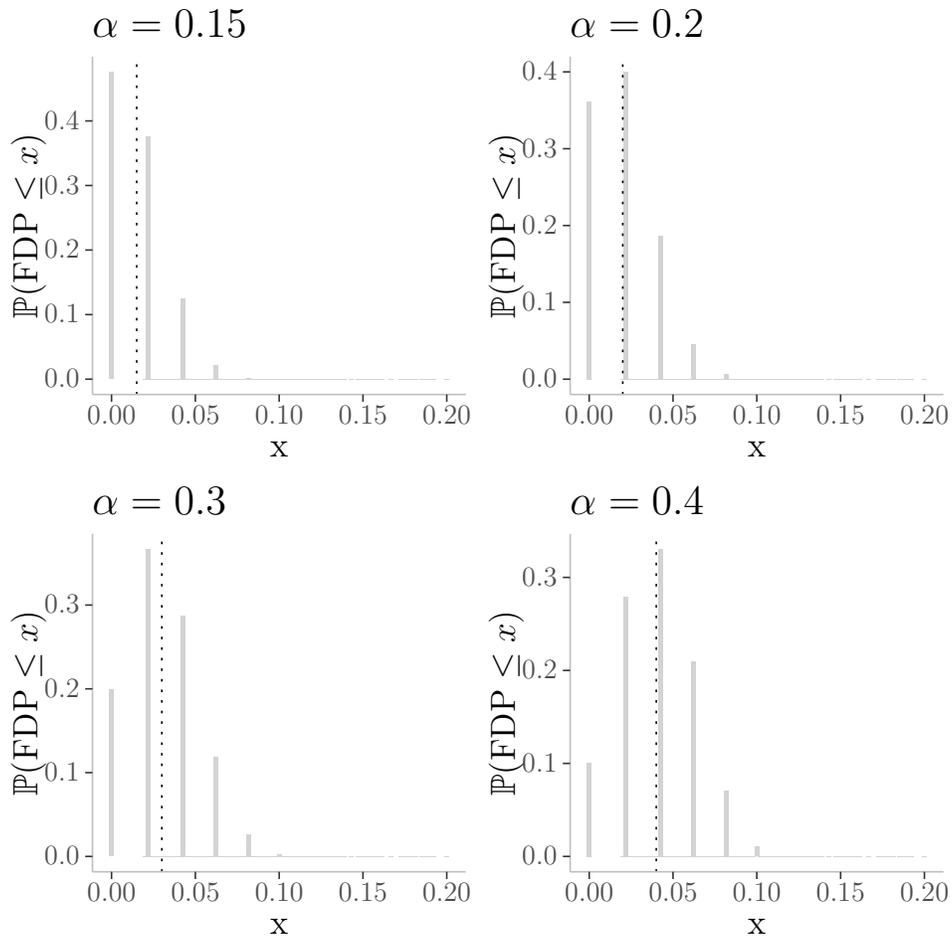}
	\caption{Distribution of the FDP for the Benjamini-Hochberg procedure (controlling the FDR at $\alpha$) for $m$ independent two-sided one sample z-tests (sample size $N=5$, common variance $\sigma^2=1$, $\mu_0=0,\mu_1=1$, cf.  \eqref{eq:model_glueck}). The dotted vertical line is the actual FDR of the test.}
	\label{fig:fdp_dist}
\end{figure}
\section{Discussion} \label{sec7}
We have presented computationally efficient and numerically stable methods for calculating the joint distribution of order statistics. Such joint distributions have a multitude of important applications that require their repeated evaluation (to numerically solve optimization problems). Apart from the applications that we have presented in Section \ref{sec6}, they include, among others, the calibration goodness-of-fit tests with equal local levels (see Section 1.4 of \cite{gontscharuk2016}) and the adjustment of the asymptotically optimal rejection curve as proposed by 
\cite{FiGoDi2012}, see Equation (19) in their paper, and \cite[Equation (6.1)]{AORC} with the goal of obtaining valid critical values for a step-up-down procedure (guaranteeing strict FDR control). The latter applications have not been considered explicitly in the present work, because they merely refer to the one-group case. For this case, the methods of \cite{moscovich2016} are already sufficiently accurate and fast.

Future extensions to our methods could include a normalization of the exponents (in  
Noe's recursion) to avoid underflows and the exploration of potential efficiency gains in the exact computation of Bolshev's recursion by a trade-off between the memory consumption and the frequency of normalizations of the intermediate rational numbers.

A preliminary version of our planned package (which utilizes RCPP, cf. \cite{RcppBook}) for the R language (\cite{Rcitation}) is available at  \url{https://github.com/jvschroeder/OrdStat/} and can be installed using the \href{https://cran.r-project.org/web/packages/devtools/index.html}{devtools package}:
\begin{verbatim}
	install.pacakges("devtools")
	devtools::install_github("jvschroeder/OrdStat")
\end{verbatim}

The code used to generate the graphics and numerical examples is available at \url{https://github.com/jvschroeder/OrdStatExamples/}. The graphics were created using ggplot2 (\cite{ggplot2}) and the R package \href{https://cran.r-project.org/web/packages/tikzDevice/index.html}{tikzDevice}.

%%%%%%%%%%%%%%%%%%%%%%%%%%%%%%%%%%%%%%%%%%%%%%%%%%%%%%%%%%%%%%%%
%% Acknowledgments                                             %
%%%%%%%%%%%%%%%%%%%%%%%%%%%%%%%%%%%%%%%%%%%%%%%%%%%%%%%%%%%%%%%%
\section*{Acknowledgments}
Jonathan von Schroeder is supported by the Deutsche Forschungsgemeinschaft (DFG) within the framework of RTG 2224  
"$\pi^3$: Parameter Identification - Analysis, Algorithms, Applications".

\appendix
\section{Proofs}
\begin{proofof}{Lemma \ref{lemma:generalized_bolshev_recursion}}
    Let $c_X(t):=\left|\left\{U_i\leq t|i\in X\right\}\right|$, $m_1\in[n_1]$, $m_2\in[n_2]$ and $m:=m_1+m_2$. Then, mimicking the approach of \cite[p. 367 ff.]{shorackwellner} and \cite[Proposition 1]{blanchard2014least}, it holds that
    \begin{align*}
       &1-\Psi(m_1,m_2):=1-\mathbb{P}\left(U_{1:m}\leq b_1,\cdots, U_{m:m}\leq b_m\right)\\
        =&1-\mathbb{P}\left(\bigcap_{k=1}^m c_{[m]}(b_k)\geq k\right)\\
        =&\mathbb{P}\left(\exists k\in[m]:c_{[m]}(b_k)=k-1\right)\\
        =&\sum_{k=1}^m\mathbb{P}\left(\left[c_{[m]}(b_k)=k-1\right]\cap\left[\bigcap_{j=1}^{k-1} c_{[m]}(b_j)\geq j\right]\right)\\
        =&1-\sum_{k=0}^{m-1}\mathbb{P}\left(\left[c_{[m]}(b_{k+1})=k\right]\cap\left[\bigcap_{j=1}^{k} c_{[m]}(b_j)\geq j\right]\right)\\
        =&\sum_{k=0}^{m-1}\sum_{\substack{X\subset [m]\\|X|=k}}\mathbb{P}\left(\left[\bigcap_{i\in[m]\setminus X} U_i>b_{k+1}\right]\cap\left[\bigcap_{j=1}^{k} c_X(b_j)\geq j\right]\right)\\
        =&\sum_{\substack{0\leq k_1\leq m_1\\0\leq k_2\leq m_2\\k_1+k_2<m}}\sum_{\substack{X\subset [m]\\|X|=k_1+k_2\\|X\cap[m_1]|=k_1}}\mathbb{P}\left(\bigcap_{i\in[m]\setminus X} U_i>b_{k_1+k_2+1}\right)\cdot\mathbb{P}\left(\bigcap_{j=1}^{k_1+k_2} c_X(b_j)\geq j\right)\\
        =&\sum_{\substack{0\leq k_1\leq m_1\\0\leq k_2\leq m_2\\k_1+k_2<m}}\binom{m_1}{k_1}\binom{m_2}{k_2}(1-b_{k_1+k_2+1})^{m_1-k_1}\cdot(1-F(b_{k_1+k_2+1}))^{m_2-k_2}\\
        &\times \mathbb{P}\left(U_{1:(k_1+k_2)}\leq b_1,\cdots, U_{(k_1+k_2):(k_1+k_2)}\leq b_{k_1+k_2}\right)
    \end{align*} Since this holds for any $m_1\in[n_1]$, $m_2\in[n_2]$ it follows that\begin{equation*}
    	\Psi(m_1,m_2)=1-\sum_{\substack{0\leq k_1\leq m_1\\0\leq k_2\leq m_2\\k_1+k_2<m}}M^{(m_1,m_2)}_{k_1,k_2}\cdot\Psi(k_1,k_2)
    \end{equation*} where $M$ is given by \eqref{eq:m_general_bolshev}.
        
    The recursions for $M$ follow from the definition of the binomial coefficient and routine calculations.
\end{proofof}

\begin{proofof}{Lemma \ref{lemma:generalized_noe_rec}}
    Let  $(a_i)_{i\in[n]}$ be an increasing sequence in $[0,1]$ such that $\forall i\in[n]:a_i<b_i$. Using notation similar to that of \cite[p. 362 ff.]{shorackwellner} let $i:=i_1+i_2$ and \begin{align}
        Q_{i_1,i_2}(m)&:=\mathbb{P}\left(\bigcap_{j=1}^{i_1+i_2}\left[a_j<X_{j:i}\leq b_j\bigcap X_{j:i}\leq c_m\right]\right)\label{eq:noe_q_def}
    \end{align} where the $c_j$ are the $2i$ boundaries $a_1,\cdots,a_{i},b_1,\cdots,b_{i}$ arranged in any ascending order. To extend the induction in \cite[p. 364 ff.]{shorackwellner} to the two-group case we only need to mimick the the approach of \cite[Proposition 1]{blanchard2014least} to provide a recursive formula for $Q_{i_1,i_2}(m)$, given that $h(m+1)-1\leq i\leq g(m-1)$ (where $g,h$ are given by \cite[p. 362, Eq. (17) and Eq. (18)]{shorackwellner}). To this end note that in this case\begin{align*}
        Q_{i_1,i_2}(m)&=\sum_{k=0}^{i}\sum_{\substack{M\subset[i]\\|M|=k}} \mathbb{P}\left(\bigcap_{j=1}^k\left[a_j< X_{j:M}\leq b_j\bigcap X_{j:M}\leq c_{m-1}\right]\right.\\
        &\cap\left.\bigcap_{j=1}^{i-k}\left[a_{k+j}<X_{j:\bar M}\leq b_{k+j}\bigcap c_{m-1}<X_{j:\bar M}\leq c_m\right]\right)\\
        &=\sum_{k_1=0}^{i_1}\sum_{k_2=0}^{ i_2}\sum_{\substack{M\subset [i]\\|M|=k_1+k_2\\|M\cap[i_1]|=k_1}}\mathbb{P}\left(\bigcap_{j=1}^k\left[a_j< X_{j:M}\leq b_j\bigcap X_{j:M}\leq c_{m-1}\right]\right)\\
        &\times\mathbb{P}\left(\bigcap_{j=1}^{i-k}\left[a_{k+j}<X_{j:\bar M}\leq b_{k+j}\bigcap c_{m-1}<X_{j:\bar M}\leq c_m\right]\right)\\
        &=\sum_{k_1=0}^{i_1}\sum_{k_2=0}^{ i_2}\sum_{\substack{M\subset [i]\\|M|=k_1+k_2\\|M\cap[i_1]|=k_1}}Q_{k_1,k_2}(m-1)\\
        &\times\mathbb{P}\left(\bigcap_{j=1}^{i-k}\left[a_{k+j}<X_{j:\bar M}\leq b_{k+j}\bigcap c_{m-1}<X_{j:\bar M}\leq c_m\right]\right)\\
        &=\sum_{k_1=0}^{i_1}\sum_{k_2=0}^{ i_2}\sum_{\substack{M\subset [i]\\|M|=k_1+k_2\\|M\cap[i_1]|=k_1}}Q_{k_1,k_2}(m-1)\times\mathbb{P}\left(\bigcap_{j=1}^{i-k}c_{m-1}<X_{j:\bar M}\leq c_m\right)\\
&=\sum_{k_1=0}^{i_1}\sum_{k_2=0}^{ i_2}\binom{i_1}{k_1}\binom{i_2}{k_2}Q_{k_1,k_2}(m-1)\times(c_m-c_{m-1})^{i_1-k_1}\\
&\times (F(c_m)-F(c_{m-1}))^{i_2-k_2}\\
        &=\sum_{\substack{0\leq k_1\leq i_1\\0\leq k_2\leq i_2\\ h(m)-1\leq k_1+k_2}}\binom{i_1}{k_1}\binom{i_2}{k_2}Q_{k_1,k_2}(m-1)\times(c_m-c_{m-1})^{i_1-k_1}\\
        &\times (F(c_m)-F(c_{m-1}))^{i_2-k_2}
    \end{align*} and $\bar M:=[i]\setminus M$ denotes the complement of $M$.

    If all $a_i=0$, then it holds that \begin{equation*}
    	g(i)=\begin{cases}
			i&i<n\\
			n&i\geq n
			\end{cases}\quad\mathrm{and}\quad
		h(i)=\begin{cases}
			1&i<n\\
			i-n&i\geq n
		\end{cases}\quad\mathrm{and}\quad
        c_i=\begin{cases}
            0&i\leq n\\
            b_{i-n}&n<i\leq 2n\\
            1&i=2n+1
        \end{cases}\\
    \end{equation*} which implies 
    \begin{equation*}
        F(c_m)-F(c_{m-1})=\begin{cases}
            0&m\leq n\\
            F(b_1)&m=n+1\\
            F(b_{m-n})-F(a_{m-n-1})&n+1<m\leq2n\\
            1-F(b_n)&m=2n+1
        \end{cases}.
    \end{equation*} From the \eqref{eq:noe_q_def} it follows that $Q_i(n)=\begin{cases}
    1&i=0\\
    0&i>0
    \end{cases}$. Thus ${Q_{i_1,i_2}\left(n+1\right)=b_1^{i_1}\cdot F(b_1)^{i_2}}$ and for $m>n+1$ it holds that\begin{multline*}
        Q_{i_1,i_2}(m)=\sum_{\substack{0\leq k_1\leq i_1\\0\leq k_2\leq i_2\\ m-n-1\leq k_1+k_2}}\binom{i_1}{k_1}\binom{i_2}{k_2}Q_{k_1,k_2}(m-1)\\
        \times(c_m-c_{m-1})^{i_1-k_1}\times (F(c_m)-F(c_{m-1}))^{i_2-k_2}
    \end{multline*} which needs to be calculated for $m-n\leq i\leq n$.

    Let $b_0:=0$, $\tilde Q_{0,0}(0):=1$, $\tilde Q_{i_1,i_2}(1):=b_1^{i_1}\cdot F(b_1)^{i_2}$ and for $m>1$\begin{multline*}
    \tilde Q_{i_1,i_2}(m):=\sum_{\substack{0\leq k_1\leq i_1\\0\leq k_2\leq i_2\\ m-1\leq k_1+k_2}}\binom{i_1}{k_1}\binom{i_2}{k_2}Q_{k_1,k_2}(m-1)\times(b_m-b_{m-1})^{i_1-k_1}\\
    \times (F(b_{m})-F(b_{m-1}))^{i_2-k_2}
\end{multline*} (which needs to be calculated for $\forall(i_1,i_2)\in[n_1]\times[n_2]:m\leq i_1+i_2\leq n$). Then it holds that
\begin{equation*}
    \Psi(i_1,i_2)=\tilde Q_{i_1,i_2}(i_1+i_2)
\end{equation*} for all $i_1\in[n_1],i_2\in[n_2]$.
\end{proofof}

\begin{proofof}{Lemma \ref{lemma:rec_complexity}}
    Counting the numer of operations in the loops of Algorithm \ref{alg:bolshev_general} it follows that
    \begin{align*}
    \text{\#Operations}&<\sum_{m_1=0}^{n_1}\left[7+\sum_{m_2=0}^{n_2}\sum_{k_1=0}^{m_1}\left[6+\sum_{k_2=0}^{m_2}10\right]\right]=\cdots\\
    &=\frac{5n_1^2n_2^2+21n_1^2n_2+15n_1n_2^2+63n_1n_2}{2}\\
    &+8n_1^2+5n_2^2+31n_1+21n_2+23
    \end{align*} holds. For the space complexity simply note that, to use the recursions for $M^{(m_1,m_2)}_{k_1,k_2}$, we need to keep track of at most $M^{(m_1,m_2-1)}_{k_1,k_2}$ and $M^{(m_1-1,m_2)}_{k_1,k_2}$ (which is pessimistic - cf. algorithm \ref{alg:bolshev_general}).
    
    For Steck's recursion first note that, using exponentiation by squaring, one can calculate $a^n$ in $O(\log_2(n))$ multiplications (where $n\in\mathbb{N}$, c.f. \cite[p. 442, Algorithm A]{KnuthTAOCP02}). Thus the first row and last column of $M^{(m_1,m_2)}$ (cf. equations \eqref{eq:steck_coeff1} \eqref{eq:steck_coeff2}) can be calculated in\begin{align*}
        &\sum_{j=1}^{m_2-1}\left[O(\log_2(m_1))+O(\log_2(m_2-j))\right]+\sum_{j=1}^{m_1-1}\left[O(\log_2(m_1-j))\right]\\
        &=O(\log_2(m_2\cdot m_1))+\sum_{j=1}^{m_2}\left[O(\log_2(j))\right]+\sum_{j=1}^{m_1-1}\left[O(\log_2(j))\right]\\
        &\subset O(\log_2(m_2\cdot m_1)+m_1\log_2(m_1)+m_2\log_2(m_2))\\
        &=O(m_1\log_2(m_1)+m_2\log_2(m_2))
    \end{align*} Thus to calculate all the coefficient matrices we need at most\begin{align*}
        &\sum_{m_1=1}^{n_1}\sum_{m_2=1}^{n_2}O(m_1\log_2(m_1)+m_2\log_2(m_2))\\
        &\subset O(n_1n_2\log_2(n_1)+n_1n_2\log_2(n_2))
    \end{align*} arithmetic operations (since we can calculate $a(m_1,j_1),a(m_2,j_2)$ for $j_1\in[m_1],j_2\in[m_2]$ in $O(m_1+m_2)$ using \eqref{eq:steck_binomial_rec}). It remains to note that \eqref{eq:steck_rec_eq} needs at most $O(m_1m_2)$ arithmetic operations. For the space complexity simply note that we do not need to keep track of the previous coefficient matrices.
    
    For Noe's recursion first note that, for every $m\in\mathbb{N}$, we can calculate $a^{(m),1}(j_1),a^{(m),2}(j_2)$ for $j_1\in[i_1],j_2\in[i_2]$ in $O(j_1+j_2)$. Furthermore, using \eqref{eq:steck_binomial_rec}, the binomial coefficients in \eqref{eq:noe_coeff_rec} can be calculated in $O(i_1+i_2)$. Thus $M^{i_1,i_2}$ can be calculated in $O(i_1i_2)$. Thus $Q_{i_1,i_2}(m)$ (assuming the necessary $Q_{\cdot,\cdot}(m-1)$ have already been calculated) is $O(i_1i_2)$. Therefore the overall computational complexity is at most\begin{align*}
        \sum_{i_1=1}^{n_1}\sum_{i_2=1}^{n_2}O(i_1i_2)&=O(i_1^2i_2^2(i_1+i_2))
    \end{align*} For the space complexity simply note, again, that we do not need to keep track of the previous coefficient matrices.
\end{proofof}

\begin{proofof}{Lemma \ref{lemma:inexact_thresholds}}
    By Noe's recursion (cf. lemma \ref{lemma:generalized_noe_rec}) the probability $\Psi(i_1,i_2)$ can be obtained by evaluating a polynomial of degree $i_1+i_2$ with only positive coefficients at $\mathbf{x}\in\mathbb{R}^{2n}$ (where $\mathbf{x}$ is given by \eqref{eq:inexact_thresholds_x_def}). It is therefore sufficient to show that the statement is true for such polynomials when applied to non-negative arguments. To provide a concise proof we utilise interval arithmetic (cf. \cite{k96}). Due to linearity (since all coefficients and inputs are non-negative) it is sufficient to show the claim for monomials $p(x_1,\cdots,x_{2n}):=\prod_{i=1}^{2n} x_i^{a_i}$ with $a_i\in\mathbb{N}$, $\sum_{i=1}^{2n} a_i\leq i_1+i_2$. Since $0 <1-\varepsilon<1$ and $1+\varepsilon>1$ it follows that \begin{equation*}
        \tilde x_i^{a_i}\in x_i^{a_i}\cdot\left(\left(1-2\varepsilon\right)^{a_i},\left(1+2\varepsilon\right)^{a_i}\right)
    \end{equation*} which implies\begin{equation*}
    	p(\tilde x_1,\cdots,\tilde x_{2n})\in p( x_1,\cdots, x_{2n})\cdot\left((1-2\varepsilon)^{i_1+i_2},(1+2\varepsilon)^{i_1+i_2}\right)
    \end{equation*} due to \cite[Equation (4)]{k96}.
\end{proofof}

\begin{proofof}{Lemma \ref{lemma:exact_dist}}
By \cite[Theorem 3.1]{roquain2011} under the unconditional model $RM(m,\pi_0,F)$ and for a step-up procedure $\text{SU}_\mathbf{t}$:

\begin{align*}
    \mathbb{P}\left(V(\text{SU}_\mathbf{t},p)=j,R(\text{SU}_\mathbf{t},\mathbf{p})=k\right)&=\mathbb{P}\left(V(\text{SU}_\mathbf{t},\mathbf{p})=j\left|R(\text{SU}_\mathbf{t},\mathbf{p})=k\right.\right)\cdot \mathbb{P}\left(R(\text{SU}_\mathbf{t},\mathbf{p})=k\right)\\
    &=\binom{m}{k}\binom{k}{j}\tilde\pi_0^j(1-\tilde\pi_0)^{k-j}G(t_k)^k\tilde\Psi_{m-k}(t_m,\cdots,t_{k+1})
\end{align*} where\begin{align*}
    \tilde\Psi_{m-k}(t_m,\cdots,t_{k+1})&:=\Psi_{m-k,0}^{\text{Uni}[0,1],F}(1-G(t_m),\cdots,1-G(t_{k+1}))\\
    \tilde\pi_0&:=\frac{\pi_0t_k}{G(t_k)}\\
    G(t)&:=\pi_0t+(1-\pi_0)F(t)
\end{align*} and $\mathbb{P}$ denotes $\mathbb{P}_{m,\pi_0,F}$.

Furthermore under the conditional model $FM(m,m_0,F)$ and for a step-up procedure $\text{SU}_\mathbf{t}$ it holds (by \cite[Section 5.3]{roquain2011}) that (where $\bar F(t):=1-F(1-t)$):\begin{align*}
   &\mathbb{P}\left(V(\text{SU}_\mathbf{t},\mathbf{p})=j,R(\text{SU}_\mathbf{t},\mathbf{p})=k\right)\\
   &=\binom{m_0}{j}\binom{m-m_0}{k-j}t_k^j\left(F(t_k)\right)^{k-j}\Psi^{\text{Uni}[0,1],\bar F}_{\substack{m-k-(m_0-j),\\ m_0-j}}(1-t_m,\cdots,1-t_{k+1})
\end{align*} where $\mathbb{P}$ denotes $\mathbb{P}_{m,m_0,F}$.
\end{proofof}

\newpage
\section{Algorithms}
\begin{algorithm}
\caption{Bolshev Recursion}\label{alg:bolshev}
\begin{algorithmic}[1]
    \Procedure{Bolshev}{$b$}
        \State $b\gets 1-b$
        \State $s\in\mathbb{R}^{n}$
        \State $s_1\gets 0$
        \For{$k=2,\cdots,n$}
            \State $v\gets 1$
            \For{$j=1,\cdots,k-1$}
                \State $v\gets v-s_{j}$
                \State $s_{j}\gets \frac{s_{j} \cdot b_{j} \cdot k}{k-(j-1)}$
            \EndFor
            \State $s_k\gets k\cdot v\cdot b_{k}$
        \EndFor
        \State\Return $1-\sum_{i=1}^n s_i$
    \EndProcedure
\end{algorithmic}
\end{algorithm}

\begin{algorithm}
\caption{Efficient Generalized Bolshev Recursion}\label{alg:bolshev_general}
\begin{algorithmic}[1]
    \Procedure{GeneralizedBolshev}{$n_1\in\mathbb{N}$,$n_2\in\mathbb{N}$, $v^{(1)}\in(0,1)^{n_1+n_2}$,$v^{(2)}\in(0,1)^{n_1+n_2}$}
        \State $r\in\mathbb{R}^{(n_1+1)\times (n_2+1)}$
        \State $\forall i\in[n_1+1],j\in[n_2+1]:r_{i,j}\gets 1$
        \State $M\gets r$
        \State $M^{(0)}\in\mathbb{R}^{n_1+1}$
        \State $\forall i\in[n_1+1]:M^{(0)}_{i}\gets 1$
        \For{$m_1=0,\cdots,n_1$}
            \For{$m_2=0,\cdots,n_2$}
                \For{$k_1=0,\cdots,m_1$}
                    \For{$k_2=0,\cdots,m_2$}
                        \If{$k_1<m_1\lor k_2<m_2$}
                            \State $r_{m_1+1,m_2+1}\gets r_{m_1+1,m_2+1} - M_{k_1+1,k_2+1}\cdot r_{k_1+1,k_2+1}$
                        \EndIf
                        \If{$m_2<n_2$}
                            \State $M_{k_1+1,k_2+1}\gets M_{k_1+1,k_2+1}\cdot\frac{m_2+1}{m_2+1-k_2}\cdot\left(1-v^{(2)}_{k_1+k_2+1}\right)$
                        \EndIf
                    \EndFor
                    \If{$m_2<n_2\land k_1<m_1$}
                        \State $M_{k_1+1,m_2+2}\gets M_{k_1+2,m_2+1}\cdot \frac{k_1+1}{m_1-k_1}\cdot\left(1-v^{(1)}_{k_1+m_2+2}\right)$
                    \EndIf
                \EndFor
            \EndFor
            \If{$m_1<n_1$}
                \For{$k_1=0,\cdots,m_1$}
                    \State $M^{(0)}_{k_1+1}\gets M^{(0)}_{k_1+1}\cdot \frac{m_1+1}{m_1+1-k_1}\cdot\left(1-v^{(1)}_{k_1+1}\right)$
                    \State $M_{k_1+1,1}\gets M^{(0)}_{k_1+1}$
                \EndFor
            \EndIf
        \EndFor
        \State\Return $r_{n_1+1,n_2+1}$
    \EndProcedure
\end{algorithmic}
\end{algorithm}

%% The Appendices part is started with the command \appendix;
%% appendix sections are then done as normal sections
%% \appendix

%% \section{}
%% \label{}

%% If you have bibdatabase file and want bibtex to generate the
%% bibitems, please use
%%
%\section*{References}
\bibliographystyle{elsarticle-harv}
\bibliography{bibliography}

\end{document}